\newtheorem*{theorem*}{Theorem}
\newtheorem{theorem}{Theorem}
\newtheorem{lemma}{Lemma}
\xdef\csname c\x\endcsname{\noexpand\ensuremath{\noexpand\mathcal{\x}}}
\xdef\csname f\x\endcsname{\noexpand\ensuremath{\noexpand\mathscr{\x}}}
\xdef\csname b\x\endcsname{\noexpand\ensuremath{\noexpand\mathbb{\x}}}
\newcommand{\rank}[0]{\mathrm{rank}}
\newcommand{\imgbox}[2]{\vcenter{\hbox{\includegraphics[width=#1]{#2}}}}
\newcommand{\ket}[1]{\ensuremath{|{#1}\rangle\!}}
\newcommand{\bra}[1]{\ensuremath{\langle{#1}|}}
\newcommand{\braket}[1]{\ensuremath{\langle{#1}\rangle}}
\def\tr{{\rm tr}}
\DeclareMathOperator*{\ex}{\mathbb{E}}
\DeclareMathOperator*{\var}{\text{Var}}
\newcommand{\idm}[0]{\mathds{1}}
\newcommand{\cinf}[0]{\overline{\text{CI}}}
\newcommand{\aotv}[0]{\protect\vv{\textsf{AOT}}_{\!t,x}}
\DeclarePairedDelimiter\abs{\lvert}{\rvert}%
\DeclarePairedDelimiter\norm{\lVert}{\rVert}%
\let\oldabs\abs
\def\abs{\@ifstar{\oldabs}{\oldabs*}}
\let\oldnorm\norm
\def\norm{\@ifstar{\oldnorm}{\oldnorm*}}
  \let\oldparagraph\paragraph
  \def\paragraph{\@ifstar{\@sparagraph}{\@uparagraph}}
  \def\@sparagraph#1{\oldparagraph*{#1}\nobreak\hskip -0.666em}
  \def\@uparagraph#1{\oldparagraph{#1}\nobreak\hskip -0.666em}
\newcommand{\anc}[0]{\mathrm{anc}}
\newcommand{\phys}[0]{\mathrm{phys}}
\newcommand{\mgate}{\mathbin{\includegraphics[height=1.1ex]{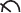}}}
\newcommand*\widefbox[1]{\fbox{\hspace{2em}#1\hspace{2em}}}
\newcommand{\printappendixtoc}[1][Appendix Contents]{%
  \begingroup
    \let\saved@addcontentsline\addcontentsline
    \let\addcontentsline\@gobblethree
    \section*{#1}%
    \let\addcontentsline\saved@addcontentsline
  \endgroup
  \@starttoc{atoc}%
}
\newcommand{\appendixonlytocsetup}{%
  \let\orig@addcontentsline\addcontentsline
  \renewcommand{\addcontentsline}[3]{%
    \def\temp@ext{##1}\def\tocext{toc}%
    \ifx\temp@ext\tocext
      \orig@addcontentsline{atoc}{##2}{##3}%
    \else
      \orig@addcontentsline{##1}{##2}{##3}%
    \fi
  }%
}
\renewcommand\onecolumngrid{
\do@columngrid{one}{\@ne}%
\def\set@footnotewidth{\onecolumngrid}
\def\footnoterule{\kern-6pt\hrule width 1.5in\kern6pt}%
}%
\renewcommand\twocolumngrid{
        \def\footnoterule{
        \dimen@\skip\footins\divide\dimen@\thr@@
        \kern-\dimen@\hrule width.5in\kern\dimen@}
        \do@columngrid{mlt}{\tw@}
}%
\begin{document}

\title{Local arrows of time in quantum many-body systems}
\author{Andrew G. Yates}
\email{andrewyates@g.harvard.edu}
\affiliation{\it Department of Physics, Harvard University, Cambridge, MA 02138, USA}
\author{Jordan Cotler}
\affiliation{\it Department of Physics, Harvard University, Cambridge, MA 02138, USA}
\author{Nishad Maskara}
\affiliation{\it Department of Physics, Harvard University, Cambridge, MA 02138, USA}
\author{Mikhail D.~Lukin}
\affiliation{\it Department of Physics, Harvard University, Cambridge, MA 02138, USA}

\begin{abstract}
We demonstrate that in quantum many-body systems, local arrows of time can differ from the global time $t$ induced by Hamiltonian evolution.  That is, within a quantum many-body system, the flow of time can be relative to each observer or by proxy each local subsystem.  We provide a definition of local arrows of time in quantum many-body systems, and explain their relation to spacetime quantum entropies.  Then we give a variety of numerical and analytical examples which explore different ways in which local arrows of time can manifest in quantum many-body dynamics, including exotic arrows of time arising from quantum thermalization and quantum error correction.  
\end{abstract}

\maketitle


\section{Introduction}

One of the foundational insights of special and general relativity is that time is relative to an observer.  That is, even if a fiducial global time $t$ is provided for the dynamics of a relativistic system, the passage of time experienced by an observer may differ from $t$.  Interestingly, in quantum many-body systems there is a related disparity between global time and local time.  The global time $t$ is the parameter appearing in the Hamiltonian time evolution operator $e^{- i H t}$, but it may disagree with the passage of time for an observer living within the system, or a proxy local subsystem~\cite{cotlerQuantumCausalInfluence2019a}.  As such, how do we understand and analyze the local flow of time for a local subsystem within a quantum many-body system?

In this paper we investigate the above question using tools from quantum information theory, and define local arrows of time within quantum many-body systems.  We show that these arrows of time comport with many of our intuitions about how time should be `experienced' within a quantum many-body system, but also bear some initially counter-intuitive features mediated by the presence of quantum entanglement.  These features build upon and complement previous studies of time within quantum many-body systems~\cite{connesNeumannAlgebraAutomorphisms1994, dibiagioArrowTimeOperational2021, fitzsimonsQuantumCorrelationsWhich2015, macconeQuantumSolutionArrowoftime2009, pageEvolutionEvolutionDynamics1983, liuInferringArrowTime2024, cotlerSuperdensityOperatorsSpacetime2018}, and elucidate the role of thermalization and quantum error correction in realizing exotic arrows of time.

This paper is organized as follows.  First we define local arrows of time in quantum many-body systems on a one--dimensional lattice.  Next, we show how the local arrows of time relate to spacetime quantum entropy~\cite{bou-comasMeasuringTemporalEntanglement2024, doiTimelikeEntanglementEntropy2023, leroseScalingTemporalEntanglement2021, milekhinObservableComputableEntanglement2025, sonnerInfluenceFunctionalManybody2021}, and involve a tradeoff between the growth of entropy in a quantum state and certain Hamiltonian correlators.  Thereafter we provide a number of numerical and analytical examples which highlight local arrows of time in quantum many-body systems, and explain connections to thermalization and quantum error correction.  Next we explain how local arrows of time can be measured in contemporary quantum simulators~\cite{maskaraProgrammableSimulationsMolecules2025, xuConstantoverheadFaulttolerantQuantum2024}.  We conclude with a discussion of our results and future directions.
\begin{figure}[h!]
    \centering
    \includegraphics[]{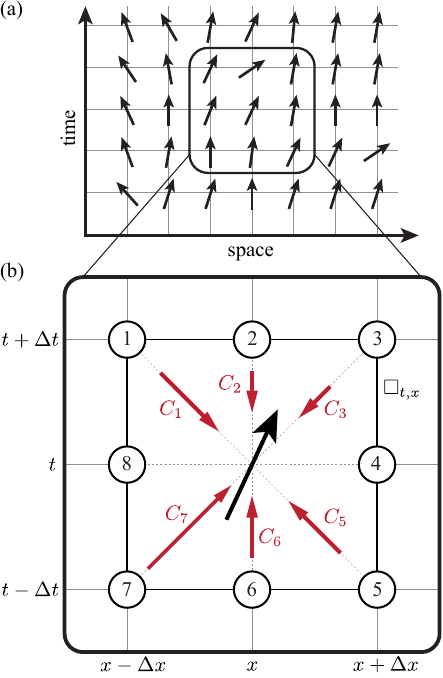}
    \caption{(a) A depiction of a spacetime lattice. (b) The inlay shows the calculation of the local arrow of time $\aotv$ (black arrow) as a sum of contributions from neighbors $q_i \in \square_{t,x}$ (Eq.~\ref{eq:MAIN_aot_def}), indexed $i=1,...,8$. Each dashed, red arrow represents the term corresponding to neighbor $q_i$, weighted by the causal influence $C_i=\cinf_{q_i,(t,x)}$ from $q_i$ to $(t,x)$. Only six red arrows are depicted because $q_4$ and $q_8$ are spacelike separated from $(t, x)$, and thus cannot contribute to the local arrow of time.}
    \label{fig:spacetimelattice}
\end{figure}

\section{Local arrow of time}

The causal structure of general quantum many-body systems can be characterized by the quantum causal influence~\cite{cotlerQuantumCausalInfluence2019a} (QCI). The QCI, which we denote by $\overline{\text{CI}}_{AB}$, is a quantum information-theoretic quantity that diagnoses the causal relationship between two spacetime regions $A$ and $B$, telling us how much $A$ `affects' $B$. For instance, it may be that $A$ affects $B$ but $B$ does not affect $A$, and so generally $\overline{\text{CI}}_{AB} \not = \overline{\text{CI}}_{BA}$.

Although the QCI can be computed in very general quantum systems without any a priori notion of time (e.g.~quite general tensor networks)~\cite{cotlerSuperdensityOperatorsSpacetime2018, feynmanTheoryGeneralQuantum1963}, here we will consider the more specialized and physically salient setting of a quantum many-body system on a lattice with global time $t$ induced by Hamiltonian evolution.  Suppose that we have two spacetime subregions $R_A$ and $R_B$. Here $A$ will denote the collection of lattice sites at time $t_A$, and $B$ the collection of lattice sites at time $t_B$. If the state of the system at time $t = 0$ is denoted by $\ket{\Psi}$, the time evolution operator is denoted by $U(t) \equiv e^{- i H t}$, a unitary in $R_A$ is denoted by $V_{R_A} \equiv U^\dagger(t_A) V_{\!A} U(t_A)$, and a Hermitian operator in $R_B$ is denoted by $O_{R_B} \equiv U^\dagger(t_B) O_{B} U(t_B)$, then $\overline{\text{CI}}_{AB}$ can be written as~\cite{cotlerQuantumCausalInfluence2019a}
\begin{align}
\label{eq:cinfAB}
\overline{\text{CI}}_{AB} \equiv \ex_{O_{\!B}} \!\var_{V_{\!A}} \braket{\Psi| V_{R_A}^\dagger O_{R_B} V_{R_A} |\Psi}\,.
\end{align}
Here $\mathbb{E}_{O_B}$ takes the average over positive semidefinite operators $O_B$ drawn from the Hilbert-Schmidt measure~\cite{bengtsson_zyczkowski_2020} on $B$, and $\operatorname{Var}_{V_A}$ takes the variance over unitaries $V_A$ drawn from the Haar measure on $A$.  
More intuitively, Eq.~\eqref{eq:cinfAB} is an information-theoretic response function: it measures how much expectation values in $B$ react to a unitary ``kick'' in $A$.  
Consider how the expectation value $\langle O_{R_b} \rangle$ changes while sweeping $V_A$ over the Haar ensemble. 
If the expectation value is insensitive to $V_A$ for every observable $O_B$, then the variance vanishes and $\overline{\text{CI}}_{AB}=0$.
More generally, the magnitude of $\overline{\text{CI}}_{AB}$ quantifies how strongly perturbations in $A$ can be detected in $B$.
Importantly, the value of $\overline{\text{CI}}_{AB}$ depends on the state $|\Psi\rangle$, the Hamiltonian $H$, and the chosen subregions.

Our aim in this paper is to develop a \textit{local} probe of quantum causal structure, enabling the study of the \textit{local arrow of time} within a quantum many-body system.  To do so, we consider a quantum many-body system on an evenly-spaced 1D lattice with spacing $\Delta x$, although our methods readily generalize.  Considering the system at times $t = n\,\Delta t$ for $n$ integer, we construct a spacetime lattice with temporal lattice spacing $\Delta t$ (see \cref{fig:spacetimelattice}).

For a given point $(t,x)$ on the lattice, consider the set $\square_{t,x} = \{(t + a \Delta t, x + b \Delta x) \, : \, (a,b) \in \{-1,0,1\}^2\setminus (0,0) \}$.  That is, $\square_{t,x}$ is the set of all lattice points on the edges and vertices of the `box' surrounding $(t,x)$.  Letting $\vec{v}_{q,(t,x)}$ be a vector from a point $q$ on the lattice to $(t,x)$, the local arrow of time is a spacetime vector given by
\begin{align}\label{eq:MAIN_aot_def}
\vv{\textsf{AOT}}_{\!t,x} \equiv \sum_{q \in \square_{t,x}} \overline{\text{CI}}_{q,(t,x)} \vec{v}_{q,(t,x)}\,.
\end{align}
This quantity is the average flow of the quantum causal influence through the point $(t,x)$, and we will show below that it unveils novel causal structure hidden in Hamiltonian dynamics.


\begin{figure*}
    \centering
    \includegraphics[width = \textwidth]{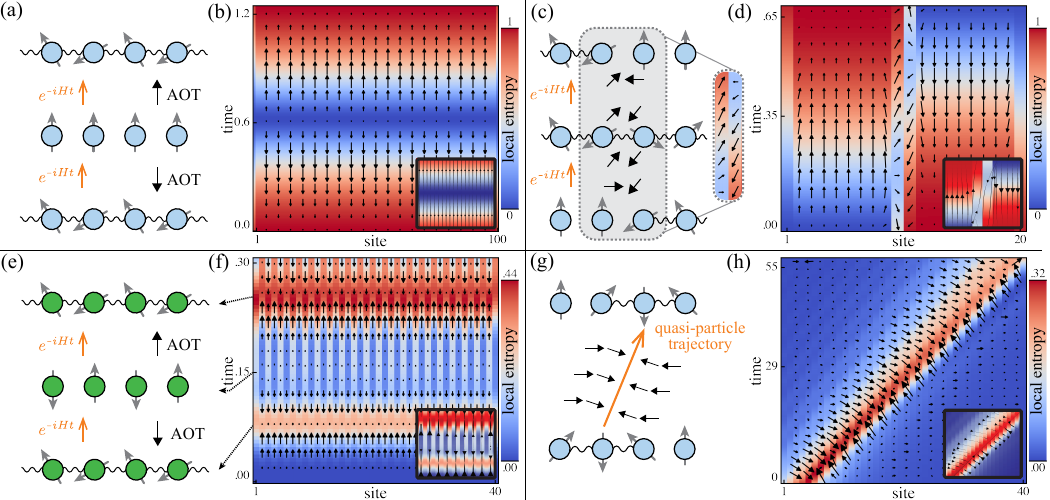}
    \caption{%
   \textbf{Vectorfield $\aotv$ under Hamiltonian $H$ with time step $\Delta t$ and an initial state $\ket{\Psi}$.} Heat maps show the single-site von Neumann entropy at each lattice site. In each subplot the evolution runs up to a maximum time $T$, which is $T = 1.2$, $0.65$, $0.30$, and $57$, respectively. All panels except (e) and (f) use the same Hamiltonian $H = \sum_j X_jX_{j+1} + .01\sum_j X_j - .21\sum_j Z_j$ and $\Delta t = 0.005$, varying only $T$ and $\ket{\Psi}$. Panels (e) and (f) use the PXP Hamiltonian.
    (a, b) The high-entropy initial state $\ket{\Psi}$ is generated by evolving a product state backwards in time by $T/2$.  Evolving that state forward under $H$ then exactly retraces the backward step, producing the blue fringe at $t = T/2$.  The arrow of time vectorfield points away from this fringe, following the local entropy gradient.
    (c, d) The system begins from a product state whose right half has been evolved backward in time by $T/2$ and whose left half has been evolved forward by $T/2$.  Under subsequent evolution by $H$, the left side becomes more entangled while the right side becomes less entangled.  By $t = T$, each half’s entropy profile is exactly reversed from its initial profile at $t = 0$.  The interaction at their boundary (sites 10 and 11) then gives rise to spatiotemporal $\aotv$ vectors at the interface.
    (e, f) Simulating the PXP Hamiltonian, the system is initialized in a Néel state, and displays periodic revivals to a low entropy state (blue fringes). Between revivals, the system has higher entropy (red fringes). The arrow of time points away from the low-entropy revivals and towards regions of higher entropy.
    (g, h) The system is initialized in a right-moving wavepacket centered at the origin. Near the packet's core, the arrow of time vectorfield exhibits combined temporal and spatial causal flow while remaining orthogonal to the quasi-particle's tangent vector. At larger distances from the center, the vectors become purely spacelike and continue to point toward the wavepacket’s center.
    \label{fig:aotexamples}}
\end{figure*}

\section{Phenomenology}

We begin by performing a numerical study of the arrow of time vectorfield $\vv{\textsf{AOT}}_{\!t,x}$ in 1+1 local Hamiltonian dynamics for 20 to 100 qubits.  Later we will provide some analytic tools for more precisely interpreting our numerical findings.

\textbf{Example 1.}~First we consider a product state $|\phi\rangle$ and evolve it with a local Hamiltonian $H$ (in our numerics we pick the quantum Ising model).  Defining $|\Psi(t)\rangle \equiv e^{- i H t}|\phi\rangle$, the state $|\Psi(0)\rangle = |\phi\rangle$ has zero local entanglement entropy, and both $|\Psi(-T/2)\rangle$ and $|\Psi(T/2)\rangle$ have non-zero local entanglement entropy for $T > 0$ (see \cref{fig:aotexamples}(a) for a depiction).  Computing the $\vv{\textsf{AOT}}_{\!t,x}$ vectorfield for these dynamics in \cref{fig:aotexamples}(b), we see that the arrow of time points towards the future for $0 \leq t \leq T/2$ and towards the past for $-T/2 \leq t \leq 0$.  This is a quantum manifestation of our thermodynamic intuition that the flow of causal influence should be related to the local flow of entropy.
\vspace*{.2cm}

\textbf{Example 2.}~Building on the previous example, we construct an example of the evolution of a pure state in which there are two arrows of time side-by-side, one pointing towards the future and the other pointing towards the past.  See \cref{fig:aotexamples}(c) and (d). In the $\vv{\textsf{AOT}}_{\!t,x}$ vectorfield, we see that on the left side of the plot the arrow of time runs from smaller $t$ to larger $t$, whereas on the right side of the plot the arrow of time runs in the opposite direction.  There is an interface between the two regions, where the arrow of time `switches' from being future-facing to being past-facing. At this interface, the spatial component of $\aotv$ always points across the boundary; its exact magnitude is determined by a specific set of Hamiltonian correlators we discuss later.  The interface spreads out at timescales longer than depicted in the Figure.
\vspace*{.2cm}

\textbf{Example 3.}~Now we turn to the PXP model with quantum scarring, defined by the Hamiltonian $H_{\text{PXP}} = \sum_i P_{i-1}\,X_i\,P_{i+1}$ with projectors $P_i = (\mathds{1} + Z_i)/2$, and initialized in the Néel state $\ket{\Psi} = \ket{0101\cdots01}$~\cite{bernien2017probing, turner2018weak, turner2018quantum}. Under these conditions the system exhibits periodic revivals, oscillating between high local entropy and low local entropy states, as is a hallmark of quantum many-body scars. The temporal component of the $\aotv$ vectorfield undergoes a periodic inversion that coincides with the entropy oscillations. This behavior illustrates the arrow of time's application even to non-ergodic dynamics: even though systems with scarring resist typical thermalizing behavior, the arrow of time nonetheless follows the local entropy gradient.
\vspace*{.2cm}

\textbf{Example 4.}~Our last example consists of quantum Ising model dynamics in which a (non-topological) quasi-particle propagates from left to right~\cite{milstedCollisionsFalseVacuumBubble2022, vandammeRealtimeScatteringInteracting2021}.  This is illustrated in \cref{fig:aotexamples}(g) and (h). The subsystem entropy in this example is simple to understand. It is large (red) around the center of the quasi-particle's wavepacket, where the quasi-particle is mostly concentrated. Away from this region, the lattice has a well-defined, low-entropy spin configuration (blue). The $\aotv$ vectorfield has two interesting properties: (i) Near the center of the wavepacket it points directly towards the quasi-particle's worldline, which means that in the high-entropy core the causal influence flows both forward in time and inward in space onto the particle’s trajectory.  (ii) Farther from the center its temporal component vanishes so there is no net forward-in-time arrow while its spatial component remains nonzero and continues to point towards the same worldline, indicating a purely spatial ``pull'' toward the quasi-particle even in low-entropy regions. Finally, note that under a formal time-reversal (i.e.~$t\mapsto -t$ and reversing the quasi-particle's momentum), the $\aotv$ vectorfield remains invariant.
\vspace*{.2cm}


\section{Short-time expansion}

Quantum entanglement enters $\aotv$ in two distinct ways: (i) the entanglement already present in $\ket{\Psi(t)}$, and (ii) the entanglement that builds over the evolution $U(\Delta t) \equiv \exp(-iH\Delta t)$. When $H$ is a local Hamiltonian on $d$-dimensional lattice and $\Delta t$ is small, these two contributions can be separated precisely. In that regime, the pre-existing entanglement provides the leading contribution to the arrow of time, while the entanglement built up during the short evolution gives a subleading correction. These leading and subleading terms make up the temporal and spatial components of $\aotv$, respectively.

In the small $\Delta t$ limit, the leading approximation of $\aotv$ is purely temporal, and its magnitude scales with the change in local purity at $x$ over the interval $\Delta t$. A series expansion of $\aotv$ around $\Delta t = 0$ gives the leading order expression
\newpage
\begin{align}
    \aotv &\simeq \frac{4i\Delta t}{d(d^{2}+1)}\imgbox{50pt}{rhoA_Hrho}\,\vec{v}_{(t,x),(t+\Delta t, x)}\label{eq:aotEntApprox} \\
    &\simeq
     \frac{2(e^{-S_2(\rho_{\!x}(t))} - e^{-S_2(\rho_{\!x}(t+\Delta t))})}{d(d^{2}+1)} \,\vec{v}_{(t,x),(t+\Delta t, x)} \nonumber
\end{align}
where $S_2(\rho_x) \equiv -\log(\tr(\rho_x^2))$ is the second-order Rényi entropy. This result shows that for a local Hamiltonian, the leading order contribution to the arrow of time vector points from regions of lower local entropy toward regions of higher local entropy. Such behavior is readily seen in our numerics, particularly in \cref{fig:aotexamples}(b) and (f), which have temporal arrow of time vectors aligned with the local entropy gradient.

The arrow of time vectorfield intuitively represents flow from low to high entropy, reflecting the familiar thermodynamic arrow of time. Yet $\aotv$ departs from this purely entropic picture by developing spatial components that do not follow the local entropy gradient.  Our numerics show clear instances of this in \cref{fig:aotexamples}(d) and (h). In both cases, the Hamiltonian is identical to that of \cref{fig:aotexamples}(b), which has a purely temporal arrow of time. Therefore, the spatial components of \cref{fig:aotexamples}(d) and (h) are a consequence of their different initial states $\ket{\Psi}$ and subsequent evolution $\ket{\Psi(t)}$. Below we discuss the precise mechanism that determines the spatial components from the state and Hamiltonian.

The subleading approximation of $\aotv$ is of order $\cO(\Delta t^2)$ and manifests as a purely spatial component. This subleading term, together with higher-order corrections, dictates how the arrow of time veers away from the local entropy gradient. To track this deviation, we analyze the individual causal influences that feed the spatial component of $\aotv$, concentrating on the four off-diagonal terms $C_1$, $C_3$, $C_5$, and $C_7$, highlighted in Fig.~\ref{fig:spacetimelattice}. These four influences quantify how a disturbance located one step away in both space and time, for example at $(t+\Delta t,x\pm\Delta x)$ or $(t-\Delta t,x\pm\Delta x)$, reaches the point $(t,x)$. Because such diagonal propagation is suppressed in product states and activated only by entanglement-driven operator spreading, the relative sizes of $C_1$, $C_3$, $C_5$, and $C_7$ set the spatial tilt of $\aotv$: their vector sum fixes not just the magnitude but the precise sideways direction in which the arrow of time deviates from the local entropy gradient.

\begin{figure}
    \centering
    \includegraphics[]{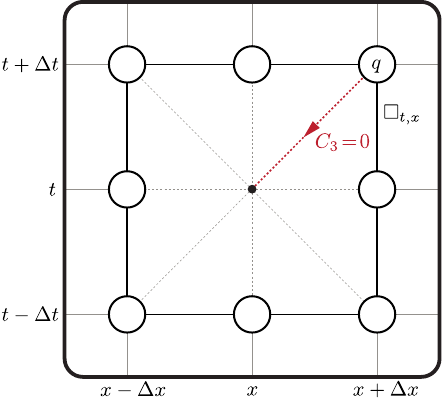}
    \caption{\textbf{Spacetime lattice illustrating \cref{thm:acausal}.} The lattice shows the neighborhood $\square_{t,x}$ around the point $(t,x)$ (black dot), with neighbor $q = (t+\Delta t, x+\Delta x)$. The red dashed arrow indicates a causal influence $C_3 = \overline{\text{CI}}_{q,(t,x)}$ that vanishes if and only if the conditions of \cref{thm:acausal} are satisfied.}
    \label{fig:theorem}
\end{figure}

\section{Characterization of local acausality}

The arrow of time at a given point depends on the causal influences exerted on the point by its neighbors, as reflected in \eqref{eq:MAIN_aot_def}. By selectively setting some of these causal influences to zero, we can control the direction of $\aotv$. In the following theorem, we give necessary and sufficient conditions for a state $\ket{\Psi(t)}$ to have a given causal influence vanish to all orders in $\Delta t$.

\begin{theorem}\label{thm:acausal}
Let $(t,x)$ be a lattice point and  
$q\in\square_{(t,x)}$ a neighboring site containing one qubit.
Set $(\tau,\xi)=q-(t,x)$ and write the state at time~$t$ in Schmidt form
\begin{align}\label{eq:thm_schmidt}
|\Psi(t)\rangle\;=\;\sqrt{p_1}\,|\psi_1\rangle_q|\phi_1\rangle_{q^c} +\sqrt{p_2}\,|\psi_2\rangle_q|\phi_2\rangle_{q^c}
\end{align}
where $q^c$ is the complement of $q$. For each $\alpha\in\{X,Y,Z\}$ expand the Heisenberg-evolved Pauli at site $x$ as
\begin{align}
\label{eq:thm_state_decomp}
\sigma_x^{\alpha}(\tau)
= \idm_q \otimes \nu^{\alpha0}_{q^c}(\tau) +\sum_{\beta=X,Y,Z}\sigma_q^{\beta} \otimes \nu^{\alpha\beta}_{q^c}(\tau).
\end{align}
Then the causal influence from $q$ to $(t,x)$ vanishes, $\overline{\mathrm{CI}}_{q,(t,x)}=0$, if and only if for every $\alpha,\beta\in\{X,Y,Z\}$
\begin{align}
\label{eq:thm_conds}
p_1\langle\phi_1|\nu^{\alpha\beta}_{q^c}|\phi_1\rangle =p_2\langle\phi_2|\nu^{\alpha\beta}_{q^c}|\phi_2\rangle
\,\, \text{and}\,\,
\langle\phi_1|\nu^{\alpha\beta}_{q^c}|\phi_2\rangle&=0\,.
\end{align}
\end{theorem}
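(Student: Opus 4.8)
The plan is to reduce $\cinf_{q,(t,x)}$ to a single–time–slice expectation, perform the Haar and Hilbert--Schmidt averages explicitly, and read off the vanishing condition as a sum of squares. First I would substitute $V_{R_A}=U^\dagger(t_A)V_AU(t_A)$ and $O_{R_B}=U^\dagger(t_B)O_BU(t_B)$ into \eqref{eq:cinfAB}, with $t_A-t_B=\tau$. Since $U(t_A)U^\dagger(t_B)=U(\tau)$, the global evolutions telescope and, after localizing the kick onto $q$, one lands on
\begin{align*}
\cinf_{q,(t,x)}=\ex_{O_B}\var_{V_A}\braket{\Psi|V_A^\dagger\,O_B(\tau)\,V_A|\Psi},
\end{align*}
where $V_A$ is now a single–qubit Haar unitary supported on $q$, $O_B(\tau)$ is the Heisenberg–evolved observable whose Pauli components at site $x$ are decomposed as in \eqref{eq:thm_state_decomp}, and $\ket{\Psi}$ is the Schmidt state \eqref{eq:thm_schmidt} (I assume $p_1,p_2>0$, the generic entangled case). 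The only care required here is the bookkeeping of the evolution operators and checking that localizing $V_A$ fixes the state/observable onto the slice containing $q$.

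Next I would carry out $\var_{V_A}$. Writing $V_A^\dagger\sigma_q^\beta V_A=\sum_\gamma R_{\beta\gamma}(V_A)\,\sigma_q^\gamma$ with $R\in SO(3)$ the adjoint rotation, and noting that the identity piece $\idm_q\otimes\nu^{\alpha0}_{q^c}$ is $V_A$–invariant, the dependence on $V_A$ is linear in $R$. The standard $SO(3)$ Haar moments $\ex_{V_A}[R_{\beta\gamma}]=0$ and $\ex_{V_A}[R_{\beta\gamma}R_{\beta'\gamma'}]=\tfrac13\delta_{\beta\beta'}\delta_{\gamma\gamma'}$ then give
\begin{align*}
\var_{V_A}[\,\cdot\,]=\tfrac13\sum_{\beta,\gamma}\Big(\textstyle\sum_\alpha c_\alpha\,\braket{\Psi|\sigma_q^\gamma\otimes\nu^{\alpha\beta}_{q^c}|\Psi}\Big)^2,
\end{align*}
where $c_\alpha$ are the site–$x$ Pauli coefficients of $O_B$. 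Averaging over $O_B$ in the Hilbert--Schmidt measure, I would invoke invariance of that measure under Clifford conjugation (which permutes and signs the $c_\alpha$ while preserving positivity and norm) to conclude $\ex_{O_B}[c_\alpha c_{\alpha'}]=\kappa\,\delta_{\alpha\alpha'}$ with $\kappa>0$. This kills the cross terms and yields
\begin{align*}
\cinf_{q,(t,x)}=\tfrac{\kappa}{3}\sum_{\alpha,\beta,\gamma}\big(\braket{\Psi|\sigma_q^\gamma\otimes\nu^{\alpha\beta}_{q^c}|\Psi}\big)^2.
\end{align*}
Because each $\nu^{\alpha\beta}_{q^c}=\tfrac12\,\Tr_q[(\sigma_q^\beta\otimes\idm_{q^c})\,\sigma_x^\alpha(\tau)]$ is Hermitian, every overlap is real, so this is a positive–weighted sum of squares; hence $\cinf_{q,(t,x)}=0$ if and only if every overlap $\braket{\Psi|\sigma_q^\gamma\otimes\nu^{\alpha\beta}_{q^c}|\Psi}$ vanishes.

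Finally I would translate this overlap condition into the Schmidt data. Inserting \eqref{eq:thm_schmidt} and working in the Schmidt basis $\{\ket{\psi_1},\ket{\psi_2}\}$ of $q$ (in which $\sigma_q^{Z},\sigma_q^{X},\sigma_q^{Y}$ act as the standard Paulis), the overlap equals $\sum_{j,k}\sqrt{p_jp_k}\,\bra{\psi_j}\sigma_q^\gamma\ket{\psi_k}\bra{\phi_j}\nu^{\alpha\beta}_{q^c}\ket{\phi_k}$. Demanding it vanish for $\gamma=Z$ gives precisely $p_1\bra{\phi_1}\nu^{\alpha\beta}_{q^c}\ket{\phi_1}=p_2\bra{\phi_2}\nu^{\alpha\beta}_{q^c}\ket{\phi_2}$, while $\gamma=X$ and $\gamma=Y$ together force $\sqrt{p_1p_2}\,\bra{\phi_1}\nu^{\alpha\beta}_{q^c}\ket{\phi_2}=0$, i.e. $\bra{\phi_1}\nu^{\alpha\beta}_{q^c}\ket{\phi_2}=0$ since $p_1,p_2>0$ (the conjugate element then vanishes by Hermiticity). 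These are exactly the conditions \eqref{eq:thm_conds}.

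The main obstacle is the pair of averaging steps, and in particular showing that the Hilbert--Schmidt average over $O_B$ decouples the $\alpha$ index with a strictly positive weight $\kappa$: this is what turns $\cinf_{q,(t,x)}$ into a manifest sum of squares and thereby licenses passing from ``$\cinf=0$'' to the termwise vanishing of each overlap. Once that structure is established, the remaining translation into \eqref{eq:thm_conds} is routine $2\times2$ linear algebra in the Schmidt basis.
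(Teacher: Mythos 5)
Your proof is correct, but it takes a genuinely different route from the paper's. The paper's proof goes through its Appendix machinery: it first derives an ``alternate form'' $\cinf_{A}(O_B)=\tr(\Theta\gamma)$, where $\Theta$ is a state-dependent positive semidefinite superoperator built from $\rho_{q^c}$ and $\gamma$ is a dynamics-dependent one built from $\bP_q\,O_B(\tau)$, using the full second-moment Haar identities with swap operators; it then computes the spectrum of $\Theta$ for a one-qubit kick (eigenvectors $p_1\proj{\phi_1}-p_2\proj{\phi_2}$, $\ket{\phi_1}\bra{\phi_2}$, $\ket{\phi_2}\bra{\phi_1}$), invokes full support of the Hilbert--Schmidt measure to reduce $\ex_{O_B}[\tr(\Theta\gamma)]=0$ to vanishing for every admissible $O_B$, extracts \eqref{eq:thm_conds} from eigenvector orthogonality, and finally patches the restriction to positive semidefinite unit-trace $O_B$ via the shift $\tilde O_B=(c\,\idm_B+O_B)/(d_Bc+\tr O_B)$. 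You instead (i) evaluate the single-qubit Haar variance directly through the adjoint map $SU(2)\to SO(3)$ and Schur orthogonality, $\ex[R_{\beta\gamma}R_{\beta'\gamma'}]=\tfrac{1}{3}\delta_{\beta\beta'}\delta_{\gamma\gamma'}$, which immediately yields a sum of squares, and (ii) decouple the $\alpha$ index by the conjugation-invariance (isotropy) of the Hilbert--Schmidt measure, $\ex[c_\alpha c_{\alpha'}]=\kappa\,\delta_{\alpha\alpha'}$ with $\kappa>0$, so that $\cinf_{q,(t,x)}$ becomes a positively weighted sum of squares of the real overlaps $\braket{\Psi|\sigma_q^\gamma\otimes\nu^{\alpha\beta}_{q^c}|\Psi}$. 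This is more elementary and self-contained: you never need the superoperator spectral analysis, and the positivity/unit-trace restriction on $O_B$ is handled automatically because you average over the constrained ensemble exactly by symmetry rather than arguing pointwise. What the paper's heavier route buys in exchange is generality and reuse: the $\Theta$--$\gamma$ decomposition works for kicks of arbitrary dimension $d_A>2$, produces the quantitative value of the causal influence rather than only its vanishing locus, and supplies the ``state sensitivity versus operator spreading'' interpretation that the paper reuses elsewhere (e.g.\ for the error-corrected causal influence).

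Two points to tighten, neither fatal. First, in the final step you silently trade the physical Paulis $\sigma_q^\gamma$ for Paulis adapted to the Schmidt basis $\{\ket{\psi_1},\ket{\psi_2}\}$; this is legitimate because the two frames are related by an orthogonal transformation of the traceless Hermitian operators on $q$, so vanishing of all three $\gamma$-components is frame-independent, but this deserves a sentence since the $\nu^{\alpha\beta}_{q^c}$ in \eqref{eq:thm_conds} are defined with respect to the physical frame. Second, your opening reduction (state on the slice of $(t,x)$, observable evolved by $\tau$) reproduces the theorem as literally stated, but glosses over whether the Schmidt decomposition is taken at the kick's or the observable's time slice and the attendant sign of the Heisenberg evolution; the paper itself is loose on exactly this point (its appendix restatement puts the state at the kick time and evolves the observable by the opposite-signed interval), so this is inherited ambiguity rather than an error in your argument, and the iff statement you prove is the same one the paper proves. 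You also correctly flag the assumption $p_1p_2>0$, without which the off-diagonal condition cannot be forced; the paper's proof needs the same genericity.
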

\noindent In words, the two Schmidt sectors must give the same diagonal expectation value of every $\nu^{\alpha\beta}$, and all corresponding off-diagonal matrix elements must vanish.  If either condition fails for any $(\alpha,\beta)$, then $\overline{\mathrm{CI}}_{q,(t,x)}>0$. We provide a proof of the above theorem, as well as statements regarding the precise value of the causal influence, in the Appendix.

An instructive application of Theorem~\ref{thm:acausal} is to see explicitly how quantum entanglement can nullify certain causal influences.  Let us focus on the nearest-neighbor Ising Hamiltonian $H = \sum_{k=1}^{N} X_k X_{k+1}$; take $(t,x)$ as in the Theorem, let $q$ be its neighbor displaced by $(\tau,\xi)$, and define $x' \equiv x-\xi$.  A direct Heisenberg-picture calculation gives the operators defined in \cref{thm:acausal}:
\begin{align}
    \nu^{YX}_{q^c}(\tau) &= \cos(2\tau)\,
        Z_x \otimes \idm_{\{x,q\}^c}
        \\
        & \quad\,+
        \sin(2\tau)\,
        X_{x'} \!\otimes\! Y_x \otimes \idm_{\{x',x,q\}^c}\,, \nonumber \\
    \nu^{ZX}_{q^c}(\tau) &= \cos(2\tau)\,
        Y_x \otimes \idm_{\{x,q\}^c} \\
        & \quad \, -
        \sin(2\tau)\,
        X_{x'} \!\otimes\! Z_x \otimes \idm_{\{x',x,q\}^c}\,. \nonumber 
\end{align}
Define two orthonormal states on $q^c$,
\begin{align}
    |\phi_1\rangle_{q^c} &=
        |+\rangle_{x'} \otimes
        \bigl(e^{+i X\tau}|0\rangle_x\bigr) \otimes
        |0\cdots0\rangle_{\{x',x,q\}^c},\\
    |\phi_2\rangle_{q^c} &=
        |-\rangle_{x'} \otimes
        \bigl(e^{-i X\tau}|0\rangle_x\bigr) \otimes
        |0\cdots0\rangle_{\{x',x,q\}^c},
\end{align}
and set $p_1=p_2=\tfrac12$.  With these choices the conditions~\eqref{eq:thm_conds} are met, so the purified state $\ket{\Psi(t)} = \sqrt{p_1}\,\ket{\psi_1}_{q}\ket{\phi_1}_{q^c} + \sqrt{p_2}\,\ket{\psi_2}_{q}\ket{\phi_2}_{q^c}$ obeys $\overline{\mathrm{CI}}_{q,(t,x)} = 0$: the qubit at $q$ cannot causally influence the site $(t,x)$.

Operationally, one prepares $\ket{\Psi(t)}$ by placing $q$ in the superposition $\tfrac{1}{\sqrt2}(\ket{0}+\ket{1})$, rotating site $x$ about the $X$–axis by $\pm\tau$ conditioned on the state of $q$, and recording the rotation's sign in the $X$-eigenstate $\ket{\pm}$ of site $x'$.  The vanishing causal influence relies on the entanglement between $q$ and $\{x',x\}$; removing either branch of the superposition restores a non-zero $\overline{\mathrm{CI}}_{q,(t,x)}$,  showing that the suppression is a genuinely quantum (and not merely classical) effect.

\section{Application to quantum error-correction}

When quantum many-body systems are implemented through error-corrected quantum simulation~\cite{chenFaultTolerantQuantum2025, daleyPracticalQuantumAdvantage2022, whitfieldSimulationElectronicStructure2011}, their causal structure becomes significantly more complex and has been largely unexplored.
This raises the question: when a many-body system is embedded in an error-correcting code, how does the code's recovery cycle reshape its causal structure -- and, in particular, does a qubit that experiences a fault retain any causal memory of its past after correction?

To address this, we incorporate active error correction into the causal influence framework by modifying Eq.~\eqref{eq:cinfAB} to account for a recovery map. We assume the recovery map is local, mapping local regions to other local regions. We substitute the physical region $B$ with the logical region $\overline B$ and replace the operator $O_B$ with $\mathcal R^{\dag}\!\left[O_{\overline B}\right]$, where $O_{\overline B}$ is any Hermitian operator acting on the logical qubits and $\mathcal R$ is the (CPTP) recovery map of the stabilizer code; $\mathcal R^{\dag}$ denotes its Hilbert--Schmidt adjoint. For such a code, the recovery channel can be written as $\mathcal R[\rho]=\sum_{s} C_s \Pi_s\rho\,\Pi_s C_s^{\dag}$, with $\Pi_s$ projecting onto the syndrome-$s$ subspace and $C_s$ applying the corresponding correction. These substitutions leave Theorem~\ref{thm:acausal} untouched, so exactly the same state-dependent conditions decide when the \emph{error-corrected causal influence} (ECI) $\overline{\mathrm{CI}}_{A\overline B}$ vanishes.

As a warmup, we begin by studying how encoding alone affects causal influence, using the Iceberg code~\cite{selfProtectingExpressiveCircuits2024}. This stabilizer code encodes an even number $k$ of logical qubits into $n=k+2$ physical qubits labeled $\{1,\dots ,k,h,v\}$. The code is stabilized by $X^{\otimes n}$ and $Z^{\otimes n}$; single-qubit logical operators are $\overline X_j = X_j X_h$ and $\overline Z_j = Z_j Z_v$ for $j = 1,\dots ,k$. For the simplest interacting Hamiltonian that acts identically at logical and physical levels,
\begin{align}
H_0 = \sum_{j=1}^{k-1} \overline X_j \overline X_{j+1}
    = \sum_{j=1}^{n-3} X_j X_{j+1}\,,
\end{align}
we find that, starting from an encoded basis state $\ket{\Psi}=\ket{\overline b}$, the causal influence of any physical qubit on itself vanishes to all orders in $\Delta t$. Thus a physical qubit cannot affect its own future even though it can mediate correlations with distant qubits, consistent with error-protected dynamics where local perturbations cannot persist independently because of the code's global correlations.  Adding a logical transverse field breaks this perfect cancellation: $H_L = H_0 + h_Z\sum_{j=1}^k \overline{Z}_j$. However, when implementing the evolution through a Trotter decomposition, careful local frame changes (e.g., Hadamard layers swapping $X\!\leftrightarrow\! Z$) allow one to steer the causal flow through designated circuit elements, preserving the suppression of self-influence outside those layers.

\begin{figure*}
    \centering
    \includegraphics[width=1\linewidth]{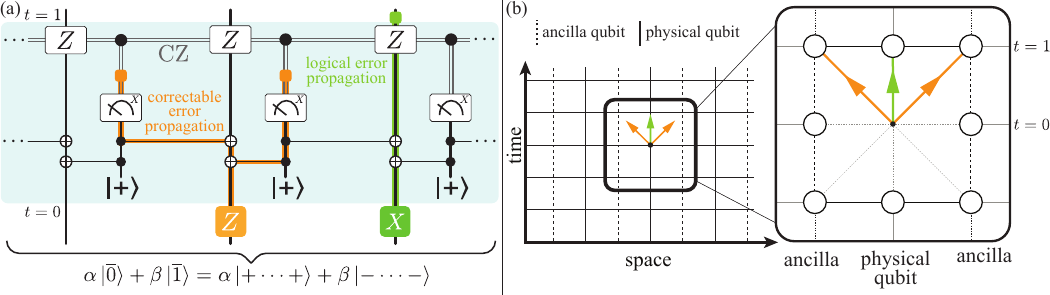}
    \caption{(a) One error-correction cycle in the 1D $X$-basis repetition code correcting single-qubit $Z$ errors. The initial state is the logical superposition $\alpha\ket{\overline{0}} + \beta\ket{\overline{1}} = \alpha\ket{+\cdots+} + \beta\ket{-\cdots-}$. Time runs from $t=0$ to $t=1$; the blue band highlights one QEC cycle. Alternating physical and ancilla qubits form two-site unit cells. Each ancilla is prepared in $|+\rangle$, coupled by $\mathsf{CX}$ gates to its two neighboring physical qubits to extract an $X$-parity syndrome, and then measured in the $X$ basis. The orange and green lines indicate pathways along which physical $Z$ and $X$ errors propagate under time evolution. Two operator species are shown: a physical $X$ (i.e., a logical $\overline Z$) commutes through the circuit and generates a \emph{strictly temporal} arrow of time (it propagates upward, leaving no syndrome record), whereas a physical $Z$ error is detected and removed so that its arrow of time \emph{terminates} at the syndrome measurement. (b) Visualization of two types of causal influences on the spacetime lattice. The green arrow represents the influence exerted by a physical $X$ perturbation (i.e., a logical $\overline Z$ error) at $t = 0$ on the same physical qubit at $t = 1$. The orange arrows represent the influence exerted by the correctable $Z$ perturbation at $t = 0$ on the ancillas at $t = 1$.}
    \label{fig:qec_cycle}
\end{figure*}

We now incorporate active error correction and begin with the simplest example that exposes the key features. Figure~\ref{fig:qec_cycle} shows one cycle of the 1D $X$-basis repetition code, in which alternating physical and ancilla qubits form two-site unit cells. The logical basis states in this code are $\ket{\overline0} = \ket{+\cdots+}$ and $\ket{\overline1} = \ket{-\cdots-}$. Each ancilla is prepared in $\ket{+}$, interacts with its two neighboring physical qubits via $\mathsf{CX}$ to extract an $X$-parity syndrome, and is then measured in the $X$ basis; the measurement outcome drives a (Pauli-frame) $Z$ correction on the physical qubits. This circuit induces qualitatively different causal flows for different operator types. For example, a physical $X$ error on any qubit acts as a logical $\overline Z$ operator on the state $\alpha\ket{+++} + \beta\ket{---}$, transforming it to $\alpha\ket{+++} - \beta\ket{---}$. Such an error goes undetected by the code, and therefore no causal influence is exerted on the ancilla qubits adjacent to the erred physical qubit. In general, a logical perturbation induces a causal influence that is purely in the temporal direction: it flows straight forward to later times without depositing information in the ancillas. By contrast, a single-qubit $Z$ error is \emph{correctable}, transforming the state to, e.g.~$\alpha\ket{+-+} + \beta\ket{-+-}$. The error is identified by the $X$-parity syndrome measurement and subsequently corrected. Thus, its causal influence to the same physical site vanishes after correction, i.e.~its arrow of time terminates at syndrome extraction. This sharp distinction between logical and correctable operators is what appears directly in the causal influence structure of the code.

Building on this intuition, let us analyze the $[[5n,n,3]]$ stabilizer code constructed from $n$ copies of the perfect $[[5,1,3]]$ code. Under the logical Hamiltonian
\begin{align}
H_0 = \sum_{j=1}^{n-1} \overline X_j \,\overline X_{j+1},
\end{align}
we can apply Theorem~\ref{thm:acausal} to find states with vanishing error-corrected causal influence. The analysis yields operators
\begin{align}
\begin{split}
	\nu^{Y\!X}_{q^c}\!\!&=\! -\tfrac{1}{2}\!\sin(4 \Delta t)\, Y_{\! 5j-2}Y_{\! 5j-1}\!\!\left[\tan(2 \Delta t)\,\overline{Y}_{\! j}\,\overline{X}_{\! j+1} \!+\! \overline{Z}_{\! j}\right]\!,\\
	\nu^{Z\!X}_{q^c} \!\!&=\! -\tfrac{1}{2}\!\sin(4 \Delta t)\,  Y_{\! 5j-2}Y_{\! 5j-1}\!\!\left[\tan(2 \Delta t)\,\overline{Z}_{\! j}\,\overline{X}_{\! j+1} \!-\! \overline{Y}_{\!j}\right]\!.
\end{split}
\end{align}
States satisfying the constraints from these operators reveal a class of intrinsically protected logical states that remain isolated from physical errors even when active error correction is imperfect. These include product states where logical qubits are in $\overline X$-eigenstates, as well as genuinely entangled choices like the Bell-like superposition $\frac{1}{\sqrt{2}}(\ket{\overline{00}}+\ket{\overline{11}})$.

For a more complete picture, let us analyze causal influences during a single error-correction cycle for general stabilizer codes. Instead of the recovery channel $\mathcal{R}$ alone, we use the dilated channel $\mathcal{R}'[\rho] = \sum_{s,t} C_s \Pi_s \rho \Pi_t C_t^\dag \otimes \ket{s}\bra{t}_{\mathrm{anc}}$ that describes the joint evolution of physical and ancilla qubits (with $\mathcal R=\mathrm{tr}_{\mathrm{anc}}\mathcal R'$). In ideal single-error-correcting stabilizer codes, two influences vanish: logical $\rightarrow$ ancilla (syndromes reveal no information about logical operations) and physical $\rightarrow$ logical (single-site perturbations are fully removed). The influence that survives is logical$\rightarrow$logical, which remains non-zero with magnitude $\overline{\mathrm{CI}}_{LL}=(D-1)/[D^2(D^2+1)]$ for codespace dimension $D=2^k$. Meanwhile, the physical $\rightarrow$ ancilla influence quantifies information flow from potential error locations to the syndrome record, serving as the diagnostic channel for error detection.

Our findings in this section reveal that causal influence analysis can pinpoint logical qubits that are intrinsically protected from specific error sources, offering a valuable complement to conventional fault tolerance. Although the practical gains must still be quantified, integrating these diagnostics could help target active correction to the causally exposed parts of a simulator while allowing shielded subsystems to evolve with smaller overhead.


\begin{figure}
    \centering
    \includegraphics[]{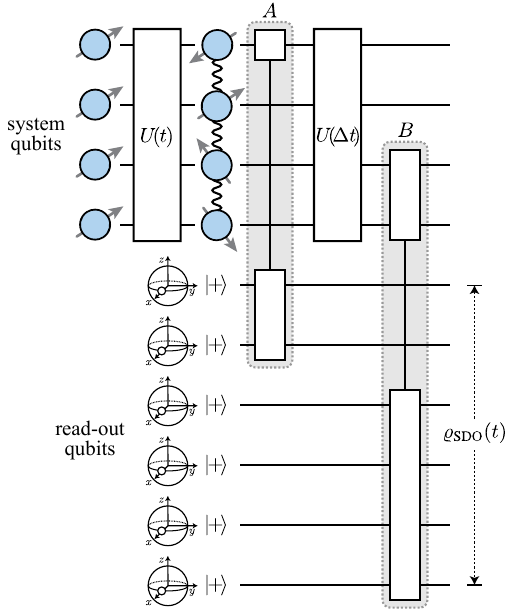}
    \caption{The quantum circuit that generates the quantum superdensity operator $\varrho_{\text{SDO}}(t)$. The measurement procedure accepts as input a quantum system in an initial state $\ket{\Psi}$.}
    \label{fig:measurementCircuit}
\end{figure}

\section{Measuring the arrow of time vectorfield}

While the quantum many-body systems we have considered here are simple enough to classically simulate, understanding causal influence and the arrow of time in the presence of strong interactions would be valuable to explore. As such, we show how causal influence quantities can be efficiently extracted in state-of-the-art quantum simulators. 

The key intuition is that the superdensity operator~\cite{cotlerSuperdensityOperatorsSpacetime2018} can be directly accessed in a quantum simulation experiment by using ancilla qubits (see~\cref{fig:measurementCircuit}). This is done by mapping the target superdensity operator to the density matrix of an ancilla register that has been entangled with the primary physical system across multiple time steps. Elsewhere in literature, this is sometimes understood as a ``space-to-time duality/mapping", and is how one can experimentally probe spatiotemporal correlations in an evolving system.

To be concrete, consider a region $A$ with $n$ qubits and a region $B$ with $m$ qubits. Further, for simplicity, assume $A$ and $B$ are entirely supported at $t=t_A$ and $t=t_B$ respectively. In general, our procedure only requires that each region is space-like (i.e.~the causal influence between any two sites in $A$ is zero, and respectively for $B$). 

The next step is to bring in an ancilla register and couple it to the primary system at times $t_A$ and $t_B$. Having $2(n+m)$ qubits is sufficient to construct the superdensity operator. One starts by preparing all ancilla qubits in the $\vert + \rangle = \frac{1}{\sqrt{2}}\left( \vert 0 \rangle + \vert 1 \rangle \right)$ state, and evolves the system up until $t_A$. Then, apply $\textsf{CX}$ gates transversally between the first $n$ ancilla qubits and region $A$. Next, one applies $\textsf{CZ}$ gates between the next $n$ ancilla qubits and region $A$. In both cases, the ancillas act as the control qubits for the $\textsf{CX}$ and $\textsf{CZ}$ gates. Measurements of the first $2n$ ancilla qubits probe region $A$ in both $X$ and $Z$ quadratures, providing full access to all local density matrices in $A$. After the ancilla/system interaction, the expectation value of any operator in the region can be estimated by measuring the ancilla register.

For the final step, one evolves the system to time $t_B$. This may involve applying a backwards time evolution, which can be implemented by flipping the sign of $H$. The procedure performed for region $A$ is repeated for region $B$, applying the controlled gates $\textsf{CX}$ and $\textsf{CZ}$ between the final $2m$ ancilla qubits and $B$. After this second interaction, the ancilla register contains all two-time correlation functions $\langle O_B(t_B) O_A(t_A) \rangle$ supported on the regions $A$ and $B$. As such, the ancilla register stores a complete representation of the superdensity operator
\begin{align}
\varrho_{\mathrm{SDO}} = \sum_{O_A, O_B} O_A \otimes O_B \langle O_B(t_B) O_A(t_A)\rangle.
\end{align}
Properties of the superdensity operator~\cite{cotlerSuperdensityOperatorsSpacetime2018} can subsequently be estimated by taking measurements of the ancillas. The causal influence is then extracted from a set of R\'{e}nyi-2 entropies collected with respect to certain partitions of the ancilla register, the precise details of which are given in the Appendix.  The R\'{e}nyi-2 entropies can be directly estimated by performing measurements in the Bell basis across two copies of the prepared state~\cite{bluvsteinLogicalQuantumProcessor2024}.


\section{Discussion and outlook}

We have developed new tools to characterize local arrows of time in quantum many-body systems, and explored different manifestations of local arrows of time in a variety of examples.  There are many other interesting examples which are beyond our ability to study using classical simulation. Thus contemporary quantum simulators could provide novel opportunities for measuring local arrows of times in settings beyond the purview of classical simulation.

These results expose a  link between microscopic ingredients, the Hamiltonian $H$ and the many-body state $|\Psi\rangle$, and the emergent arrow of time.  Whereas $H$ and $|\Psi\rangle$ have so far been inputs for computing the causal influence vectorfield, the vectorfield itself can also act as a diagnostic.  Tracking its short-time evolution provides a tomographic diagnostic that indicates which interaction terms amplify or quench local influence and, in parallel, displays how structural features of $|\Psi\rangle$ are imprinted in spacetime.  Hence measurements of causal influence could ultimately constrain, and even help identify, novel features of both the effective Hamiltonian and the underlying quantum state governing complex many-body dynamics.

Our examples also uncover an interesting connection between local arrows of time and quantum error correction. When a many-body system is encoded in a stabilizer code the code’s global constraints redistribute causal influence: individual physical qubits lose the ability to influence their own futures, while information is funneled through the shared logical degrees of freedom. In other words, the code ``straightens'' the local arrow of time by quenching microscopic self-influence and forcing causal flow to propagate non-locally across the code block. This suppression is invisible in thermodynamic diagnostics but becomes obvious once one tracks the full causal influence vectorfield. We have already analyzed ECI within a single QEC cycle for ideal stabilizer circuits, distinguishing logical components that propagate strictly forward in time from correctable faults that terminate at syndrome extraction. The next steps would be to extend this analysis to multiple cycles under realistic noise and imperfect recovery; to codes beyond our toy examples (e.g.~surface and LDPC codes); and to error models including coherent and leakage faults, while quantitatively correlating ECI flow with decoder performance and scheduling choices. Quantifying these patterns would clarify how information is redistributed among physical qubits, ancillas, and the recovered logical state, and could furnish operational diagnostics for decoder performance, leakage, and circuit-scheduling choices. Furthermore, applying this causal framework to recent spacetime quantum error-correction codes~\cite{gottesmanOpportunitiesChallengesFaulttolerant2022} could reveal new organizing principles. Indeed, quantum causal structure itself might serve as a design principle for developing novel error-correcting codes with tailored information flow properties. Finally, we defined an ``error-corrected causal influence'' for which our acausality theorem still holds, allowing us to follow how faults spread at the physical level and to pinpoint logical subsystems that remain shielded. In future work it would be valuable to apply this framework to a wider class of codes in order to map out systematic links between a code’s causal geometry and key performance metrics such as logical error rates, threshold values, and the emergence (or absence) of well-behaved local arrows of time.


\emph{Acknowledgments.}
We would like to thank Pablo Bonilla, Sasha Geim, and Frank Wilczek for valuable discussions. We acknowledge financial support from the U.S. Department of Energy (DOE Quantum Systems Accelerator Center, grant number DE-AC02-05CH11231),
DARPA  MeasQuIT program (grant number HR0011-24-9-0359), the Center for Ultracold Atoms (an NSF Physics Frontier Center), the National Science Foundation (grant numbers PHY-2012023, CCF-2313084). 

\pagebreak
\onecolumngrid

\appendix
\appendixonlytocsetup
\printappendixtoc

\section{Notation and mathematical preliminaries}\label{sec:appNotation}
In this section we will introduce the notation and mathematical background used throughout the Appendix. The specialized notation will be useful for developing intuition about the causal influence, and for proving its various properties.

\subsection{Operators and projection superoperators}

We begin by defining $\cH$ as the vector space of $N$ qubits with the dimension $d = \dim\cH = 2^N$, and $\mathrm{End}(\mathcal{H})$ the operator space on these $N$ qubits. The standard basis of the operator space is the set of $N$-qubit Pauli strings, $\{\sigma^{i_1}_1\sigma^{i_2}_2 \cdots \sigma^{i_N}_N \,:\, i_1, \dots, i_N \in \{\idm, X, Y, Z\}\}$. Any operator $O\in\mathrm{End}(\mathcal{H})$ may be decomposed as
\begin{align}
	O &= \sum_{i_1, \dots, i_N \in \{\idm, X, Y, Z\}} O_{i_1\cdots i_N} \sigma^{i_1}_1\sigma^{i_2}_2 \cdots \sigma^{i_N}_N
\end{align}
where $\sigma^{i_k}_k$ is the operator $\idm$, $X$, $Y$, or $Z$ on the $k$th qubit. We will sometimes use $i_k = 0$ to denote $i_k = \mathds{1}$.  If we write $\sigma^i_S$ where $S$ is a subsystem of more than one qubit, then it refers to the Pauli string operator with the index $i$ acting on that subsystem.

The next step is to endow the operator space with an inner product, which will also motivate a braket-like notation~\cite{chenSpeedLimitsLocality2023} for operators that becomes useful later in the appendix. For two operators $A, B \in \mathrm{End}(\cH)$, we define their normalized Hilbert-Schmidt inner product as
\begin{align}
	(A|B) &\equiv \frac{\tr(A^\dag B)}{d}.
\end{align}
This naturally leads us to the vectorized forms of the operators $B$ and $A^\dag$, which we denote as $|B)$ and $(A|$, respectively. In this notation, a superoperator $\cT$, defined as a linear map on operators, acts as $|A) \mapsto \cT|A) \equiv |\cT(A))$.

The \textit{projection superoperator} $\bP_k$ acts as
\begin{align}
	\bP_k(\sigma^{i_1}_1 \cdots \sigma^{i_N}_N) = (1 - \delta_{i_k 0})\sigma^{i_1}_1 \cdots \sigma^{i_N}_N,
\end{align}
where $k \in \{1, 2, \dots, N\}$. We also define the projector for a subset of sites $S \subseteq \{1, 2, \dots, N\}$,
\begin{align}\label{eq:proj_op_def}
	\bP_S(\sigma^{i_1}_1 \cdots \sigma^{i_N}_N) &\equiv \left(1 - \prod_{k\in S}\delta_{i_k 0}\right)\sigma^{i_1}_1 \cdots \sigma^{i_N}_N.
\end{align}
In other words, $\bP_S$ projects out the part of an operator that acts trivially on $S$.

\subsection{Averaging with the Haar and Hilbert-Schmidt measures}

\noindent\paragraph*{Haar averages.} Expectation values over unitary operators $U\in\mathrm{End}(\cH)$ are taken with respect to the Haar measure $\mu(U)$:
\begin{align}
	\ex_{U}\left[f(U)\right] = \int d\mu(U)\, f(U).
\end{align}
For any operators $\rho,\sigma\in\mathrm{End}(\cH)$ one has
\begin{subequations}\label{eq:haar_ids}
\begin{align}
	\ex_{U}\left[U\sigma U^{\dag}\right] &= \frac{\tr(\sigma)}{d}\idm,\\
	\ex_{U}\left[ U^{\otimes 2}(\rho\otimes\sigma) U^{\dag\otimes 2}\right] &= \frac{1}{d(d^2 - 1)}\left[\tr(\rho)\tr(\sigma)\left(d\,\idm - \mathbb{F}\right) + \tr(\rho\sigma)\left(d\,\mathbb{F} - \idm\right) \right],
\end{align}
\end{subequations}
where $\mathbb{F}$ swaps the two tensor factors. The Haar variance is
\begin{align}
	\underset{U}{\mathrm{Var}}\left[f(U)\right] &\equiv \ex_U\left[f(U)^2\right] - \ex_U\left[f(U)\right]^2.
\end{align}

\paragraph*{Hilbert-Schmidt averages.} In the definition of the causal influence \eqref{eq:cinfAB} we must average over positive operators drawn from the Hilbert-Schmidt measure. The simplest way to define this is to write
$$
O = Q^{\dagger}Q,\qquad \text{with } \operatorname{tr}(Q^{\dagger}Q)=1 .
$$
To sample from the Hilbert-Schmidt measure, we draw $Q$ uniformly from the unit sphere defined by $\tr(Q^{\dagger}Q)=1$, and then form $O$. We define expectation values over $O$ accordingly,
\begin{align}
    \bE_{O} f(O) \equiv \int_{\tr(Q^\dag Q) = 1} dQ\, f(Q^\dag Q)
\end{align}
An important property of this measure is that if $f(Q) \geq 0$ for all $Q$ and $\ex_{Q} f(Q) = 0$, then $f(Q) = 0$ everywhere, since $f$ is continuous and the measure has full support.

\section{Approximations of the causal influence}
In the following, we provide the first few nontrivial terms of the causal influence \eqref{eq:cinfAB} as a power series in $\Delta t$. In the first part of the discussion, subsystem $A$ is defined at time $t$ and subsystem $B$ is defined at time $t + \Delta t$. This convention differs from that used in the main body of the paper, where the times for subsystems $A$ and $B$ are swapped from what they are here. We assume the system is in a pure state $\rho$, and give the causal influence of $A$ on $B$ in two separate cases.

The ``same-site'' causal influence is given by considering the same subsystem at different times, i.e., $A = B$. We have
\begin{align}
\begin{split}
    \overline{\mathrm{CI}}_{AA}
     &\simeq \frac{1}{d_A(d_A^{2}+1)}\Bigg\{\tr(\rho_A^2) - \frac{1}{d_A} -\frac{2\Delta t^2}{d_A^2-1}\bigg[\tr\big( \tr_{A^c}\left(H\rho_{A^c}\right)^2 - \frac{1}{2}\|\left[\tr_A(H),\rho_{A^c}\right]\|^2_2 - H^2\rho_{A^c}\big) \\
     &\qquad + \frac{d_A}{2}\|\left[H, \rho_{A^c}\right]\|^2_2 + \frac{1}{d_A}\Big(\tr\left(\tr_A(H)^2\rho_{A^c}\right) - \tr\left(H\rho_{A^c}\right)^2\Big) \bigg]\Bigg\}
\end{split}
\end{align}
where $\rho_S \equiv \idm_{S^c}\otimes\tr_{S^c}(\rho)$ for any subsystem $S$.

Alternatively, $A$ and $B$ do not overlap at all, $A\cap B = \emptyset$. Then,
\begin{align}\label{eq:cinfAB2}
    \overline{\mathrm{CI}}_{AB}
     \simeq \frac{2\Delta t^2}{d_B(d_A^2 - 1)(d_B^2 + 1)}\mathrm{Re}\,\tr\left\{ \rho_{A^c}V_{AA^c}\tr_{AB}(\rho)V_{AA^c} - \tr_B\left(\rho_{A^c}V_{AA^c}\right)^2 + \frac{1}{d_A^2}\tr_{(AB)^c}\!\left([V_{AA^c}, \rho_{A^c}]\right)^2 \right\}
\end{align}
where $V_{AA^c} \equiv H - \tr_{A^c}(H)\otimes\idm_{A^c}/d_{A^c} - \idm_A\otimes\tr_{A}(H)/d_A$
is the term in $H$ that interacts $A$ and $A^c$. We can uniquely decompose the interaction as
\begin{align}
    V_{AA^c} = V_{AB} + V_{AE} + V_{ABE},
\end{align}
where $E \equiv (A\cup B)^c$ and each term in the expression acts nontrivially on $A\otimes B$, $A\otimes E$, and $A\otimes B\otimes E$. If $V_{ABE} = 0$ (e.g., a nearest-neighbor Hamiltonian), then \eqref{eq:cinfAB2} simplifies to
\begin{align}\label{eq:diffsiteABEzero}
    \overline{\mathrm{CI}}_{AB}
     \simeq \frac{2\Delta t^2}{d_B(d_A^2 - 1)(d_B^2 + 1)}\left[\tr\left(\tr_A\left(\rho_{AB}^2\right)\tr_A\left(V_{AB}^2\right) - V_{A_1B}\rho_{A_2B}V_{A_1B}\rho_{A_2B}\right) - \frac{1}{2d_A} \|\left[V_{AB}, \rho_B\right]\|_2^2\right].
\end{align}

Restoring the arrow of time convention, where $A$ and $B$ are defined at times $t - \Delta t$ and $t$, respectively, we have
\begin{align}
    &\cinf_{AA} \simeq \frac{1}{d_A(d_A^2+1)}\left\{\imgbox{35pt}{rhoA_sq} - \frac{1}{d_A} - 2i\Delta t\imgbox{50pt}{rhoA_Hrho}\right\},\\
    &\cinf_{AB}
     \simeq \frac{2\Delta t^2}{d_B(d_A^2 - 1)(d_B^2 + 1)}\left\{\imgbox{90pt}{rhoAB_sq_H_sq} - \imgbox{110pt}{rhoAB_H_rhoAB_H} - \frac{1}{2d_A} \|\left[V_{AB}, \rho_B\right]\|_2^2\right\},
\end{align}
where higher orders in $\Delta t$ have been suppressed. These two equations correspond to \eqref{eq:cinfAB2} and \eqref{eq:diffsiteABEzero}, respectively.

\section{Alternate form of the causal influence}\label{sec:altFormCinf}

This section derives and interprets an alternate form of the causal influence. It is more amenable to interpretation because it separates the causal influence into state- and dynamics-dependent terms whose interaction determines the exact value of the causal influence. We clarify this point after the derivation.

\subsection{Derivation of the alternate form}\label{app:altFormDerive}

To begin, expand the definition \eqref{eq:cinfAB} of $\cinf_{AB}$,
\begin{align}
    \cinf_{AB} &= \bE_{O_B}\var_{V_A}\tr\left(U^\dag(t_A) V^\dag_A U(t_B - t_A)^\dag O_B U(t_B - t_A) V_A U(t_A)\rho\right).
\end{align}
Then, redefine the initial state by having it absorb the factors of $U(t_A)$ (i.e., $U(t_A)\rho U(t_A)^\dag \mapsto \rho$). We are left with
\begin{align}
    \cinf_{AB} &= \bE_{O_B}\var_{V_A} \tr\left(V^\dag_A O_B(t) V_A \rho\right),
\end{align}
where $t \equiv t_B - t_A$. In this section, we will focus on the inner part of the expression
\begin{align}
    \cinf_{A}(O_B) \equiv \var_{V_A} \tr\left(V^\dag_A O_B(t) V_A \rho\right),
\end{align}
where $\cinf_{A}(O_B)$ indicates a causal influence that explicitly depends on the observable $O_B$.

To derive the alternate form of the causal influence, start with the operator expansion
\begin{align}
    O_B(t) &= \frac{1}{\sqrt{d}}\sum_{i,j} c_{ij}(t) \,\sigma^i_A \sigma^j_{A^c}\,,\label{eq:operBAltDeriv}
\end{align}
and the identity
\begin{align}
    \underset{V_A}{\mathrm{Var}}\left[ \tr (V_A^{\dagger} O_B(t) V_A \rho) \right] &= \tr \left\{ \left( \ex_{V_A} \left[\left(V_A^{\dagger} O_B(t) V_A\right)^{\otimes 2} \right] - \ex_{V_A} \left[V_A^{\dagger} O_B(t) V_A\right]^{\otimes 2} \right) \rho^{\otimes 2} \right\}.\label{eq:altDeriveVar}
\end{align}
Then apply \eqref{eq:haar_ids} to obtain
\begin{align}
    \ex_{V_A} \big[ V_A^{\dagger \otimes 2} O_B(t)^{\otimes 2} V_A^{\otimes 2} \big] &= \frac{1}{d}\sum_{i,j,k,l} c_{ij}(t) c_{kl}(t) \ex_{V_A} \big[ V_A^{\dagger \otimes 2} (\sigma_A^i \otimes \sigma_A^k) V_A^{\otimes 2} \big] (\sigma_{A^c}^j \otimes \sigma_{A^c}^l) \\
    &= \frac{1}{d (d_A^2 -1)} \sum_{i,j,k,l} c_{ij}(t) c_{kl}(t) \big[ d_A \delta^{i0} \delta^{k0} (d_A \idm_A - \bF_A) + \delta^{ik} (d_A \bF_A - \idm_A) \big] (\sigma_{A^c}^j \otimes \sigma_{A^c}^l)
\end{align}
and
\begin{align}
    \ex_{V_A} \big[ V_A^{\dagger} O_B(t) V_A \big]^{\otimes 2} &= \frac{1}{d}\sum_{i,j,k,l} c_{ij}(t) c_{kl}(t) \big( \ex_{V_A} [V_A^{\dagger} \sigma_A^i V_A] \otimes \ex_{V_A} [V_A^{\dagger} \sigma_A^k V_A] \big) (\sigma_{A^c}^j \otimes \sigma_{A^c}^l) \\
    &= \frac{1}{d} \sum_{i,j,k,l} c_{ij}(t) c_{kl}(t) \delta^{i0} \delta^{k0} \idm_A (\sigma_{A^c}^j \otimes \sigma_{A^c}^l)
\end{align}
where $d_S \equiv \dim\cH_S$ for any subsystem $S$ and $\bF_A$ swaps the two $A$ subsystems in the doubled Hilbert space. Together, they imply
\begin{align}
	\ex_{V_A} \left[\left(V_A^{\dagger} O_B(t) V_A\right)^{\otimes 2} \right] - \ex_{V_A} \left[V_A^{\dagger} O_B(t) V_A\right]^{\otimes 2} &= \frac{1}{d_{A^c}}\sum_{i,j,k,l} c_{ij}(t) c_{kl}(t)  G^{ik}_A (\sigma_{A^c}^j \otimes \sigma_{A^c}^l)
\end{align}
where
\begin{align}
	G^{ik}_A &\equiv \frac{1}{d_A^2 - 1}\left(\delta^{ik} - \delta^{i0}\delta^{k0}\right)\left(\bF_A - \frac{\idm_A}{d_A}\right).
\end{align}
Note that $G^{i0}_A = G^{0k}_A = 0$, so that \eqref{eq:altDeriveVar} becomes
\begin{align}
	\underset{V_A}{\mathrm{Var}}\left[ \tr (V_A^{\dagger} O_B(t) V_A \rho) \right] &= \sum_{j,l}\underbrace{\frac{d}{d_A^2 - 1}\tr\left[\left(\bF_A - \frac{\idm_A}{d_A}\right) \frac{\sigma_{A^c}^j \otimes \sigma_{A^c}^l}{d_{A^c}} \rho^{\otimes 2}\right]}_{\textstyle\Theta_{jl}}\sum_{i} \frac{c_{ij}(t)^{*}c_{il}(t)}{d}(1 - \delta^{i0}).
\end{align}
Using \eqref{eq:operBAltDeriv} we obtain 
\begin{align}
	\frac{c_{ij}(t)^{*}c_{il}(t)}{d}(1 - \delta^{i0}) &= \frac{\tr(\sigma^i_A \sigma^j_{A^c}\bP_A O_B(t))^{*}\tr(\sigma^i_A \sigma^l_{A^c}\bP_A O_B(t))}{d^2} = \underbrace{(\bP_A O_B(t)|\sigma^i_A \sigma^j_{A^c})}_{\textstyle\Gamma_{ij}^{*}} \underbrace{(\sigma^i_A \sigma^l_{A^c}|\bP_A O_B(t))}_{\textstyle\Gamma_{il}}\,,
\end{align}
where $\bP_A$ is the projection superoperator \eqref{eq:proj_op_def}. The final expression is
\begin{subequations}\label{eq:altFormCinfDefMatrixElements}
\begin{align}
	\cinf_{A}(O_B) &= \sum_{jl}\Theta_{jl}\gamma_{jl},\\
    \gamma_{jl} &\equiv\sum_{i}\Gamma^{*}_{ij}\Gamma_{il},\label{eq:altFormCinfDefMatrixElementsgamma}\\
    \Gamma_{ij} &= (\sigma^i_A \sigma^j_{A^c}|\bP_A O_B(t)),\\
	\Theta_{jl} &= \frac{1}{d_A^2-1}\left[d_A\tr\big(\sigma^{j}_{A^c}\rho_{A^c}\sigma^l_{A^c}\rho_{A^c}\big) - \tr\big(\sigma^j_{A^c}\rho_{A^c}\big)\tr\big(\sigma^l_{A^c}\rho_{A^c}\big)\right],\label{eq:altMatElePsi}
\end{align}
\end{subequations}
or in operator form,
\begin{subequations}\label{eq:altFormCinfDef}
\begin{align}
    \cinf_{A}(O_B) &= \tr\left(\Theta\gamma\right),\\
    \gamma &= \tr_A\big[|\Gamma)(\Gamma|\big],\label{eq:altFormCinfDefgamma}\\
    |\Gamma) &= \bP_A|O_B(t))\,.
\end{align}
\end{subequations}
The two sets of equations \eqref{eq:altFormCinfDefMatrixElements} and \eqref{eq:altFormCinfDef} are equivalent.

To obtain an expression more suited for numerical purposes, we can explicitly carry out the Pauli sums in \eqref{eq:altFormCinfDefMatrixElementsgamma}, or equivalently, the partial trace in \eqref{eq:altFormCinfDefgamma}. This gives
\begin{align}
    \cinf_{A}(O_B) &= \frac{1}{d_A^2 - 1}\bigg\{\tr\Big(\big(O_B(t)\rho_{A^c}\big)^2\Big) - \frac{1}{d_A}\tr\Big(\tr_{A^c}\big(O_B(t)\rho_{A^c}\big)^2\Big) - \frac{1}{d_A}\tr\Big(\tr_{A}\big(O_B(t)\rho_{A^c}\big)^2\Big) + \frac{1}{d^2_A}\tr\big(O_B(t)\rho_{A^c}\big)^2\bigg\},
\end{align}
where $\rho_{A^c} \equiv \idm_A\otimes\tr_A(\rho)$.

\subsection{Properties and the spectrum of $\Theta$}

\subsubsection{$\Theta$ and $\gamma$ are positive semidefinite}\label{app:psiIsPSD}

\begin{lemma}
    The superoperator $\gamma$ is positive semidefinite.
\end{lemma}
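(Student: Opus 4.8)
The plan is to exploit the manifest Gram-matrix structure hidden in the definition of $\gamma$. From the operator form \eqref{eq:altFormCinfDef}, $\gamma = \tr_A\big[|\Gamma)(\Gamma|\big]$ is the partial trace, over the $A$ factor of the operator space $\mathrm{End}(\cH) = \mathrm{End}(\cH_A)\otimes\mathrm{End}(\cH_{A^c})$, of the rank-one and manifestly positive superoperator $|\Gamma)(\Gamma|$. Since $|\Gamma)(\Gamma|\succeq 0$ and the partial trace over a subsystem is a positive map, one concludes immediately that $\gamma\succeq 0$. This is the conceptual one-line argument; below I outline a fully elementary verification in the Pauli basis that avoids invoking positivity of the partial trace.

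First I would record that the $A^c$-Pauli strings $\{\sigma^j_{A^c}\}$ form an orthonormal basis of $\mathrm{End}(\cH_{A^c})$ under the normalized Hilbert--Schmidt inner product, $(\sigma^j_{A^c}|\sigma^l_{A^c}) = \delta_{jl}$. Consequently the matrix $(\gamma_{jl})$ of \eqref{eq:altFormCinfDefMatrixElementsgamma} is precisely the representation of the superoperator $\gamma$ in an orthonormal basis, so that positive semidefiniteness of $\gamma$ as a superoperator is equivalent to positive semidefiniteness of the Hermitian matrix $(\gamma_{jl})$. I would then check the quadratic form directly: for any coefficient vector $(w_j)$, using $\gamma_{jl} = \sum_i \Gamma^*_{ij}\Gamma_{il}$ and completing the square gives
\begin{align}
\sum_{jl} w_j^{*}\, \gamma_{jl}\, w_l \;=\; \sum_i \Bigl|\sum_j \Gamma_{ij}\, w_j\Bigr|^{2} \;\ge\; 0\,.
\end{align}
Hence $(\gamma_{jl})\succeq 0$, and by the basis remark so is $\gamma$. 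Equivalently, viewing $\Gamma$ as the rectangular matrix with entries $\Gamma_{ij}$ (rows indexed by $i$, columns by $j$), the computation simply says $\gamma = \Gamma^{\dagger}\Gamma$, which is positive semidefinite by construction.

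I expect no substantive obstacle here: the statement is essentially a structural observation rather than a hard inequality. The only points requiring care are bookkeeping ones, namely confirming that the Pauli strings on $A^c$ are orthonormal under the chosen inner product so that the matrix-level inequality transfers faithfully to the superoperator, and tracking the complex-conjugation convention in $\gamma_{jl}$ so that the completed square collapses to $\big|\sum_j \Gamma_{ij}w_j\big|^2$ with the correct index on $w$ versus $w^{*}$. Both are routine.
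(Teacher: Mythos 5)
Your proposal is correct and its core argument is exactly the paper's proof: the paper disposes of the lemma in one line by noting positivity is ``manifest from its operator form'' $\gamma = \tr_A\big[|\Gamma)(\Gamma|\big]$, which is precisely your opening observation that the partial trace of the rank-one positive superoperator $|\Gamma)(\Gamma|$ is positive semidefinite. Your additional Pauli-basis verification (showing $\gamma = \Gamma^{\dagger}\Gamma$ at the matrix level, with orthonormality of the $\sigma^j_{A^c}$ under the normalized Hilbert--Schmidt inner product) is a sound and faithful unpacking of that same fact rather than a different route.
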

\begin{proof}
    This is manifest from its operator form, $\gamma = \tr_A\big[|\Gamma)(\Gamma|\big]$.
\end{proof}

\begin{lemma}
    If $d_A \equiv \dim\cH_A \leq \dim\cH_{A^c}$ and $\rho\in\mathrm{End}(\cH)$ is a pure state, then the superoperator $\Theta$ is positive semidefinite.
\end{lemma}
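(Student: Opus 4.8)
The plan is to reduce the positive semidefiniteness of $\Theta$ to a single elementary inequality. Since $\Theta$ is a superoperator on $\mathrm{End}(\cH_{A^c})$ whose matrix elements $\Theta_{jl}$ in \eqref{eq:altMatElePsi} are expressed in the Pauli basis $\{\sigma^j_{A^c}\}$, it is positive semidefinite precisely when $\sum_{jl} v_j^{*}\,\Theta_{jl}\,v_l \geq 0$ for every coefficient vector $v$. Because the Paulis are a Hermitian operator basis, setting $M \equiv \sum_j v_j\,\sigma^j_{A^c}$ lets $M$ range over all operators on $A^c$, and collapsing the Pauli sums in \eqref{eq:altMatElePsi} (using $\sum_j v_j^{*}\sigma^j = M^{\dagger}$ together with the Hermiticity of $\sigma^j$ and $\tilde\rho$) yields the compact form
\begin{align}
\sum_{jl} v_j^{*}\,\Theta_{jl}\,v_l = \frac{1}{d_A^2-1}\Big[\,d_A\,\tr\big(M^{\dagger}\tilde\rho\, M\,\tilde\rho\big) - \big|\tr(\tilde\rho\, M)\big|^2\,\Big],
\end{align}
where $\tilde\rho \equiv \tr_A(\rho)$ is the reduced state on $A^c$. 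One first checks $\Theta=\Theta^{\dagger}$ (immediate from cyclicity of the trace and Hermiticity of $\sigma^j,\tilde\rho$) so that positive semidefiniteness is well posed, and notes the prefactor $1/(d_A^2-1)$ is positive, so the sign is set by the bracket.

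It then remains to prove the single inequality
\begin{align}
d_A\,\tr\big(M^{\dagger}\tilde\rho\, M\,\tilde\rho\big)\ \geq\ \big|\tr(\tilde\rho\, M)\big|^2
\end{align}
for every operator $M$ on $A^c$. To do so I would diagonalize $\tilde\rho = \sum_k \lambda_k\,\proj{k}$ and pass to its eigenbasis, writing $M_{mk} = \bra{m}M\ket{k}$. A short computation gives $\tr(M^{\dagger}\tilde\rho M\tilde\rho) = \sum_{m,k}\lambda_m\lambda_k|M_{mk}|^2$ and $\tr(\tilde\rho M) = \sum_k \lambda_k M_{kk}$, so the target reduces to
\begin{align}
\Big|\sum_k \lambda_k M_{kk}\Big|^2\ \leq\ d_A\sum_{m,k}\lambda_m\lambda_k\,|M_{mk}|^2 .
\end{align}

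The key step — and the place where both hypotheses enter — is to bound the diagonal sum by Cauchy--Schwarz restricted to the support of $\tilde\rho$. Because $\rho$ is pure, $\tilde\rho = \tr_A\rho$ has rank equal to the Schmidt rank of $\ket{\Psi}$ across the $A\,|\,A^c$ cut, and the assumption $d_A \leq \dim\cH_{A^c}$ forces $\rank(\tilde\rho)\leq\min(d_A,\dim\cH_{A^c}) = d_A$. Applying Cauchy--Schwarz over the at most $d_A$ indices $k$ with $\lambda_k>0$ gives $|\sum_k\lambda_k M_{kk}|^2 \leq \rank(\tilde\rho)\sum_k \lambda_k^2|M_{kk}|^2 \leq d_A\sum_k\lambda_k^2|M_{kk}|^2$, and discarding the nonnegative off-diagonal ($m\neq k$) terms on the right-hand side of the target inequality finishes the argument.

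I expect the main obstacle to be exactly this last bound. A naive termwise comparison (keeping only $m=k$ on the left and matching against the diagonal on the right) fails, since individual eigenvalues $\lambda_k$ may be smaller than $1/d_A$; it is precisely the rank-counting Cauchy--Schwarz that produces the correct prefactor $d_A$ and makes essential use of purity together with $d_A\leq\dim\cH_{A^c}$. That the inequality is genuinely false without these hypotheses is already visible at $M=\idm$, where it demands $\tr(\tilde\rho^2)\geq 1/d_A$ — true exactly because $\rank(\tilde\rho)\leq d_A$, but violated once $\tilde\rho$ is allowed full rank $\dim\cH_{A^c}>d_A$.
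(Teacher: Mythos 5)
Your proof is correct and follows essentially the same route as the paper's: both reduce positive semidefiniteness to the scalar inequality $d_A\tr\bigl(M^\dagger\tilde\rho M\tilde\rho\bigr)\geq\abs{\tr(\tilde\rho M)}^2$, invoke the rank bound $\rank(\tilde\rho)\leq d_A$ coming from purity together with $d_A\leq\dim\cH_{A^c}$, and close with a Cauchy--Schwarz argument whose counting factor is that rank. The only difference is cosmetic: the paper packages the Cauchy--Schwarz step at the operator level, writing $Q\equiv\sqrt{\tilde\rho}\,M\sqrt{\tilde\rho}$ and bounding $\abs{\tr(\Pi Q)}^2\leq\tr(\Pi)\tr(Q^\dagger Q)$ against the support projector $\Pi$, whereas you carry out the equivalent computation entrywise in the eigenbasis (scalar Cauchy--Schwarz on the diagonal, then discarding the nonnegative off-diagonal terms).
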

\begin{proof}
Define $P_{A^c} \equiv \sum_\alpha v_\alpha \sigma_{A^c}^\alpha$. It suffices to show that 
\begin{align}
d_A\,\tr\!\left(P_{A^c}^\dag\rho_{A^c}P_{A^c}\rho_{A^c}\right) - \abs{\tr\left(P_{A^c}\rho_{A^c}\right)}^2 \geq 0
\end{align}
for all $P_{A^c}\in\mathrm{End}(\cH_{A^c})$. Let the spectral decomposition of $\rho_{A^c}$ be
\begin{align}
    \rho_{A^c} &= \sum_{k=1}^r p_k \ket{\phi_k}\bra{\phi_k},
\end{align}
where $r \leq \min(\dim\cH_A, \dim\cH_{A^c}) = d_A$. Define
\begin{align}
    Q &\equiv \sqrt{\rho_{A^c}} P_{A^c} \sqrt{\rho_{A^c}},\quad \Pi \equiv \sum_{k=1}^r \ket{\phi_k}\bra{\phi_k}.
\end{align}
By the Cauchy-Schwarz inequality, we have
\begin{align}
    d_A\tr\!\left(P_{A^c}^\dag\rho_{A^c}P_{A^c}\rho_{A^c}\right) - \abs{\tr\left(P_{A^c}\rho_{A^c}\right)}^2 &= d_A \tr(Q^\dag Q) - |\tr(\Pi Q)|^2 \geq d_A \tr(Q^\dag Q) - \tr(\Pi)\tr(Q^\dag Q) \geq 0.
\end{align}
\end{proof}

\subsubsection{Spectrum of $\Theta$ if $d_A = 2$}

In this section and the next, we compute the spectrum of $\Theta$, relating it to the pure state $\rho\in\mathrm{End}(\cH)$. It will be useful to view $\Theta$ not only as a matrix per its original definition in \eqref{eq:altFormCinfDefMatrixElements}, but as a superoperator that acts on the space $\mathrm{End}(\cH_{A^c})$. 

The object $\Theta_{ij}$ is manifestly a $d_{A^c}^2\times d_{A^c}^2$ matrix of real entries. As a superoperator, its matrix elements in the basis of Pauli strings are
\begin{align}\label{eq:psiIJ_spec}
	\Theta_{ij} &= (\sigma^i_{A^c}|\Theta|\sigma^j_{A^c}).
\end{align}
To compute the spectrum of $\Theta$, we solve the eigenvalue equation $\lambda v_i = \sum_j\Theta_{ij}v_j$. Using \eqref{eq:altMatElePsi} and \eqref{eq:psiIJ_spec}, the eigenvalue equation becomes
\begin{align}\label{eq:psiOpFormEigvalEq}
	\lambda v_{A^c} &= \Theta(v_{A^c}) \equiv \frac{d}{d_A^2-1}\Big(\rho_{A^c}v_{A^c}\rho_{A^c} - \frac{1}{d_A}\tr(\rho_{A^c}v_{A^c})\rho_{A^c}\Big)
\end{align}
where $v_{A^c} \equiv \sum_i v_i \sigma^i_{A^c}$. Now, suppose that $A$ consists of one qubit, so that $d_A = 2$. The spectral decomposition of $\rho_{A^c}$ is then
\begin{align}\label{eq:rhoAcDecomp}
	\rho_{A^c} &= p_1 \ket{\phi_1}\bra{\phi_1} + p_2 \ket{\phi_2}\bra{\phi_2}
\end{align}
where $\ket{\phi_1}, \ket{\phi_2} \in \cH_{A^c}$ are orthogonal. Then, we substitute \eqref{eq:rhoAcDecomp} and the ansatz $v_{A^c} = \alpha_1 \ket{\phi_1}\bra{\phi_1} + \alpha_2 \ket{\phi_1}\bra{\phi_2} + \alpha_3 \ket{\phi_2}\bra{\phi_1} + \alpha_4 \ket{\phi_2}\bra{\phi_2}$ into \eqref{eq:psiOpFormEigvalEq} and solve the resulting system. Ultimately, we find a particularly simple set of eigenvalues and eigenvectors, listed in \cref{table:eigvalsOfPsi}.

\begingroup
\setlength{\tabcolsep}{6pt} 
\renewcommand{\arraystretch}{1.25} 
\begin{table}
\centering
\begin{tabular}{|c|c|c|}
 \hline
 \multicolumn{3}{|c|}{$\Theta$} \\
 \hline
 Eigenvalue & Eigenvectors & Conditions \\
 \hline
  $\frac{d}{d_A^2-1}p_j p_k$ & $\Big\{\ket{\phi_j}\bra{\phi_k}, \ket{\phi_k}\bra{\phi_j}\Big\}$ & $ d_A \geq 2, j \neq k$ \\ [2.5ex] 
 $\frac{d}{3} \frac{\tr(\rho_A^2)}{2}$ & $ p_1\ket{\phi_1}\bra{\phi_1} - p_2\ket{\phi_2}\bra{\phi_2}$ & $ d_A = 2$ \\ [2.5ex]
  $0$ & $p_2\ket{\phi_1}\bra{\phi_1} + p_1\ket{\phi_2}\bra{\phi_2}$ & $d_A = 2$ \\ [2ex] 
 \hline
\end{tabular}
\begin{tabular}{|c|c|c|c|c|}
\hline
 \multicolumn{5}{|c|}{Properties (inside their respectively restricted domains)} \\
 \hline
 Operator & Dim & Rank & Nullity & Trace\\
 \hline
 $\Theta$ & $r^2$ & $r^2 - 1$ & $1$ & $\frac{d}{d_A^2-1}\left(1 - \|\rho_A\|^2_F\right)$ \\
 $\Theta_{\mathrm{od}}$ & $r(r-1)$ & $r(r - 1)$ & 0 & $\frac{d}{d_A^2-1}\left(1 - d_A\|\rho_A\|^2_F\right)$\\
 $\Theta'$ & $r$ & $r$ - 1 & $1$ & $\frac{d}{d_A-1}\|\rho_A\|^2_F$\\
 \hline
\end{tabular}
\caption[]{Left table: Eigenvalues and eigenvectors of $\Theta$, where $\rho_A = \tr_{A^c}(|\Psi\rangle\langle\Psi|)$ and $\rho_{A^c}$ has the decomposition \eqref{eq:rhoAcDecomp}. Right table: The (superoperator) rank and trace for the operators $\Theta$, $\Theta_{\mathrm{od}}$, and $\Theta'$, where $r \equiv \rank(\rho_A)$.}
\label{table:eigvalsOfPsi}
\end{table}
\endgroup

In operator form, $\Theta$ is given by
\begin{align}
    \Theta &= \frac{d^2}{6}\left(\frac{1}{2}|v^1_{A^c})(v^1_{A^c}| + p_1p_2\left(|v^2_{A^c})(v^2_{A^c}| + |v^3_{A^c})(v^3_{A^c}|\right)\right)
\end{align}
where
\begin{align}
    v^1_{A^c} \equiv p_1\ket{\phi_1}\bra{\phi_1} - p_2\ket{\phi_2}\bra{\phi_2}, \quad v^2_{A^c} \equiv \ket{\phi_1}\bra{\phi_2}, \quad v^3_{A^c} \equiv \ket{\phi_2}\bra{\phi_1},
\end{align}
are the eigenvectors of $\Theta$ given by \cref{table:eigvalsOfPsi}.
If, for example, the system is in the product state $\rho = \rho_A \otimes \rho_{A^c}$, then $\Theta$ has the matrix elements
\begin{align}\label{eq:rhoA-pure-psi}
    (\xi|\Theta|\xi) &= \frac{1}{3}|\tr(\rho_{A^c}\xi)|^2,\quad (d_A = 2, \rho = \rho_A \otimes \rho_{A^c})
\end{align}
where $\xi\in\mathrm{End}(\cH_{A^c})$.

\subsubsection{Spectrum of $\Theta$ if $d_A \geq 2$}

In general, for $d_A \geq 2$, the spectral decomposition of $\rho_{A^c}$ is
\begin{align}
    \rho_{A^c} = \sum_{k=1}^{\min(d_A, d_{A^c})} p_k \ket{\phi_k}\bra{\phi_k}
\end{align}
where again $\ket{\phi_k} \in \cH_{A^c}$ are orthogonal. The larger dimension complicates the eigenvectors and eigenvalues of $\Theta$ considerably, but parts of the spectrum remain straightforward. For example, $v_{A^c} = \ket{\phi_j}\bra{\phi_k}$ for $j \neq k$ is an eigenvector with eigenvalue $\frac{d}{d_A^2-1}p_j p_k$. 

In general, a nontrivial eigenvector of $\Theta$ can be expressed as the linear combination
\begin{align}
	v_{A^c}^{i} &= \sum_{j,k=1}^{\rank(\rho_A)}v^{i,jk} \ket{\phi_j}\bra{\phi_k}_{A^c}.
\end{align}
Note that it is sufficient to only include contributions from the eigenbasis of $\rho_{A^c}$, $\{\ket{\phi_j} : 1 \leq j \leq \rank(\rho_A)\}$, because the states orthogonal to the support of $\rho_{A^c}$ are canceled in \eqref{eq:psiOpFormEigvalEq}, giving an eigenvalue of $\lambda = 0$.

To continue, we define $\Theta = \Theta' +\Theta_{\mathrm{od}}$, where
\begin{align}
	\Theta_{\mathrm{od}}(u_{A^c}) \equiv \frac{d}{d_A^2-1}\sum_{j\neq k}p_j p_k\braket{\phi_j|u_{A^c}|\phi_k}\ket{\phi_j}\bra{\phi_k}_{A^c}
\end{align}
is the superoperator restricted to the ``off-diagonal'' part of $\Theta$ with eigenvalues given in the first row of \cref{table:eigvalsOfPsi}. The remaining eigenvectors, those of $\Theta'$, are spanned by the ``diagonal'' elements $\ket{\phi_i}\bra{\phi_i}$,
\begin{align}
	\Theta'(u_{A^c}) = \sum_{i=1}^{\rank(\rho_A)} \lambda_i \braket{\phi_i|u_{A^c}|\phi_i}\ket{\phi_i}\bra{\phi_i}_{A^c}.
\end{align}
For the sake of simplicity, we restrict $\Theta$ to $\mathrm{span}\{\ket{\phi_i}\bra{\phi_j} : 1 \leq i, j \leq \rank(\rho_A)\}$, $\Theta'$ to $\mathrm{span}\{\ket{\phi_i}\bra{\phi_i} : 1 \leq i \leq \rank(\rho_A)\}$, and $\Theta_{\mathrm{od}}$ to $\mathrm{span}\{\ket{\phi_i}\bra{\phi_j} : 1 \leq i \neq j \leq \rank(\rho_A)\}$. Their various properties are summarized in \cref{table:eigvalsOfPsi}. Finally, it is useful to note the matrix elements of $\Theta'$ in the $\ket{\phi_i}\bra{\phi_i}$ basis,
\begin{align}
	\Theta'_{ij} &\equiv \braket{\phi_i|\Theta\left(\ket{\phi_j}\bra{\phi_j}\right)|\phi_i} = \frac{d}{d_A^2 - 1}\left(\delta_{ij}p_j^2 - \frac{1}{d_A}p_ip_j\right).
\end{align}
In virtually all examples we examine, $d_A = 2$, in which case $\Theta'_{ij} = (d/3)(\delta_{ij} - 1/2)p_i p_j$ for $i, j\in \{1, 2\}$.

\subsection{Physical interpretation of the alternate form of the causal influence}\label{sec:physAltForm}

In \cref{app:altFormDerive} we introduced the $O_B$-dependent causal influence $\cinf_{A}(O_B)$, which captures how strongly perturbations on $A$ can influence the outcome of a measurement of $O_B$. It was shown to admit the compact `alternate form'
\begin{equation}
\cinf_{A}(O_B) = \underset{V_A}{\mathrm{Var}}\big[\tr\bigl(V_A^{\dagger}O_B(t)V_A\rho\bigr)\big]
=\tr(\Theta\gamma(t)),\label{eq:AltForm}
\end{equation}
where the two positive semidefinite operators,
\begin{subequations}
\begin{align}
\Theta &\equiv \frac{1}{d_A^2-1}\sum_{j,l}\left[d_A\tr\bigl(\sigma^{j}_{A^c}\rho_{A^c}\sigma^{l}_{A^c}\rho_{A^c}\bigr)
-\tr\bigl(\sigma^{j}_{A^c}\rho_{A^c}\bigr)\tr\bigl(\sigma^{l}_{A^c}\rho_{A^c}\bigr)\right]|\sigma^{j}_{A^c})(\sigma^{l}_{A^c}|,\label{eq:Psidef}\\
\gamma(t) &\equiv\tr_A\bigl[|\Gamma)(\Gamma|], \quad |\Gamma)=\bP_A|O_B(t)),
\end{align}
\end{subequations}
play complementary roles:
\begin{itemize}
\item \emph{Static factor $\Theta$} -- depends only on the state $\rho$ through its reduction $\rho_{A^c}$.
\item \emph{Dynamic factor $\gamma(t)$} -- depends only on the time-evolved operator $O_{B}(t)$ and therefore encodes the Hamiltonian and the evolution time $t$, but it is independent of $\rho$.
\end{itemize}
Either factor can suppress the causal influence, as the trace in~\eqref{eq:AltForm} vanishes whenever $\Theta$ and $\gamma(t)$ have orthogonal support. The rest of this subsection extracts the physical meaning of each factor, showing how their alignment—quantified by a spectral overlap—controls the flow of information from $A$ to $B$.

\subsubsection{State dependence: The operator $\Theta$ and entanglement}\label{subsec:PsiPhysics}

The operator $\Theta$ quantifies the innate sensitivity of the state to local perturbations on subsystem $A$. Equation~\eqref{eq:Psidef} shows that $\Theta$ is built solely from second-order moments of the reduced density matrix $\rho_{A^c}$. As proven in \cref{app:psiIsPSD}, $\Theta$ is positive semidefinite, which suggests an interpretation as a covariance matrix. Indeed, $\Theta_{jk}$ is the covariance of a specific set of random variables, describing the correlations between certain measurements that depend on the entanglement across the $A|A^c$ partition.

The connection to entanglement becomes explicit when we inspect the spectrum of $\Theta$. By diagonalizing this covariance matrix, we find its principal components: the directions corresponding to observables on $A^c$ that exhibit the largest variation when random unitary perturbations are applied to $A$. These eigenvectors $|v^k_{A^c})$ represent the system's most sensitive channels. Crucially, their corresponding eigenvalues $\lambda_k$, which quantify the sensitivity, are determined by the entanglement structure.

For instance, if subsystem $A$ is a qubit ($d_A=2$) and the global state is pure, the eigenvalues of $\rho_{A^c}$ are the Schmidt coefficients of the bipartite state. The most significant eigenvalues of $\Theta$ are directly proportional to the products of these Schmidt coefficients ($p_j p_k$ for $j \neq k$). If the state has no entanglement (i.e., it is a product state), then only one Schmidt coefficient is nonzero, and all of these key eigenvalues of $\Theta$ vanish. This provides a clear physical picture: \emph{entanglement opens sensitive channels through which information can flow, and $\Theta$ captures the structure of these channels}. Large eigenvalues $\lambda_k$ mark these highly sensitive directions, and their corresponding eigenvectors $|v^k_{A^c})$ define the operators on $A^c$ most susceptible to influence from perturbations on $A$.

\subsubsection{Dynamical dependence: The operator $\gamma$}\label{subsec:gammaPhysics}

Whereas $\Theta$ describes the state's static sensitivity, $\gamma(t)$ captures the dynamical spreading of the operator $O_B$ into the causal neighborhood of $A$. The calculation of $\gamma(t)$ begins with the Heisenberg-evolved operator $O_B(t)$. The unnormalized vector $|\Gamma)=\bP_A|O_B(t))$ projects $O_B(t)$ onto the subspace of operators that have nontrivial support on subsystem $A$. The subsequent partial trace over $A$ yields an operator that lives entirely on $A^c$.

Physically, $\gamma$ measures the extent to which the dynamics cause the operator $O_B$ to straddle the $A|A^c$ partition. For systems with local interactions, this takes time; thus, at early times or for weak couplings, the components of $O_B(t)$ with support on $A$ will be small, making $\gamma$ small. Conversely, chaotic dynamics can cause the operator to spread rapidly across the system, driving $\gamma$ towards its maximal value.

The spectrum of $\gamma$ reveals the ``directions'' of this operator spreading. If $O_B(t)$ evolves to have a large component proportional to an operator like $\sigma^i_A \otimes u^l_{A^c}$, then $|u^l_{A^c})$ will be an eigenvector of $\gamma$ with a large eigenvalue. In this way, \emph{$\gamma$ quantifies how strongly the dynamics couple operators on $A^c$ to the subsystem $A$ within the evolving operator $O_B(t)$}.

\vspace*{.2cm}
\emph{Example: a two-qubit interaction.}
To make this discussion of $\gamma(t)$ concrete, consider a simple system of two interacting qubits, labeled $A$ and $B$. Let subsystem $A$ be qubit $A$ and subsystem $A^c$ be qubit $B$. The dynamics are governed by the Ising Hamiltonian $H = J X_A Z_B$. We choose an initial observable $O_B = X_B$. The Heisenberg-evolved operator is then $O_B(t) = X_B\cos(2Jt) - X_A Y_B\sin(2Jt)$. It has two components: a part $\idm_A \otimes X_B\cos(2Jt)$ that acts trivially on $A$, and a part $-X_A \otimes Y_B\sin(2Jt)$ that acts on both $A$ and $B$. The projection operator $\bP_A$ isolates the part with nontrivial support on $A$: $\bP_A O_B(t) = -X_A \otimes Y_B\sin(2Jt)$.

We can now identify the operators $u^l_{A^c}(t)$ from the decomposition $\bP_A O_B(t) = \sum_{l=X,Y,Z} \sigma^l_A \otimes u^l_{A^c}(t)$. In this case, we have $u^x_{A^c}(t) = -Y_B\sin(2Jt)$, while $u^y_{A^c}(t)$ and $u^z_{A^c}(t)$ are zero. Using the definition of $\gamma(t)$, we find
\begin{align}
    \gamma(t) &= \sum_{l=X,Y,Z} |u^l_{A^c}(t))(u^l_{A^c}(t)| = |\!-Y_B\sin(2Jt))(-Y_B\sin(2Jt)| = \sin^2(2Jt) |Y_B)(Y_B|.
\end{align}
The final expression for $\gamma(t)$ is a rank-1 superoperator acting on $\mathrm{End}(\cH_B)$. Its time-dependent magnitude, $\sin^2(2Jt)$, quantifies the weight of the component of the evolved operator that has spread to act nontrivially on subsystem $A$. It starts at zero and grows as the interaction takes effect, illustrating how $\gamma(t)$ captures the flow of operator weight across the partition.

\subsubsection{Causal influence as spectral overlap}

The complete physical picture of the causal influence arises from combining the static, state-dependent sensitivity of $\Theta$ with the dynamic, operator-dependent spreading of $\gamma$. The expression $\cinf_{A}(O_B) = \tr(\Theta\gamma)$ is an inner product on the space of operators on $\mathcal{H}_{A^c}$, measuring the alignment between $\Theta$ and $\gamma(t)$. We can make this explicit by writing the trace in terms of the eigenvalues and eigenvectors of each operator. If $\Theta = \sum_k \lambda_k |v^k)(v^k|$ and $\gamma(t) = \sum_l \mu_l(t) |u^l(t))(u^l(t)|$, then:
\begin{align}
	\cinf_{A}(O_B) = \sum_{k,l} \lambda_k \mu_l(t) |(v^k|u^l(t))|^2\,.
\end{align}
This equation encapsulates the full story. For information to flow from $A$ to $B$, two conditions must be met simultaneously:
\begin{enumerate}
	\item The \emph{state} must be sensitive to perturbations along certain directions $|v^k)$ (i.e., have large eigenvalues $\lambda_k$).
	\item The \emph{dynamics} must spread the operator $O_B$ onto components $|u^l(t))$ that have significant weight (large $\mu_l(t)$) \emph{and} are aligned with the state's sensitive directions (large overlap $|(v^k|u^l(t))|^2$).
\end{enumerate}
Thus, the causal influence is strong only when the dynamical spreading of an operator occurs along the very channels that the entanglement structure of the state has made sensitive to perturbation.

\section{Proof of \cref{thm:acausal}}

In this section, we prove \cref{thm:acausal}, which provides the conditions for having exactly zero causal influence between two target regions. We rewrite the theorem here for convenience.
\begin{theorem*}
Let $A$ and $B$ be subsystems of one qubit each and let 
\begin{align}
    |\Psi\rangle\;=\; \sqrt{p_1} \ket{\psi_1}_A\ket{\phi_1}_{A^c} + \sqrt{p_2} \ket{\psi_2}_A\ket{\phi_2}_{A^c}
\end{align}
be the Schmidt form for the global state of the system, where $A^c$ is the complement of $A$. For each $\alpha\in\{X, Y, Z\}$, expand the Heisenberg-evolved Pauli at subsystem $B$ as
\begin{align}
\sigma_B^{\alpha}(t)
= \idm_A \otimes \nu^{\alpha0}_{A^c}(t) +\sum_{\beta=X, Y, Z}\sigma_A^{\beta} \otimes \nu^{\alpha\beta}_{A^c}(t).
\end{align}
Then the causal influence from $A$ to $B$ over a time $t$ vanishes, $\overline{\mathrm{CI}}_{AB}=0$, if and only if for every $\alpha,\beta\in\{X, Y, Z\}$
\begin{align}
p_1\langle\phi_1|\nu^{\alpha\beta}_{A^c}|\phi_1\rangle =p_2\langle\phi_2|\nu^{\alpha\beta}_{A^c}|\phi_2\rangle
\,\, \text{and}\,\,
\langle\phi_1|\nu^{\alpha\beta}_{A^c}|\phi_2\rangle&=0\,.
\end{align}
\end{theorem*}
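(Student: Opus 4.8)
The plan is to reduce the vanishing of the \emph{averaged} causal influence to a spectral-overlap condition, and then to read the theorem's two constraints directly off the explicit spectrum of $\Theta$ computed in \cref{sec:altFormCinf}.

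First I would strip away the average over $O_B$. Since $\cinf_{AB}=\ex_{O_B}\cinf_A(O_B)$ with each integrand $\cinf_A(O_B)=\var_{V_A}[\,\cdot\,]\ge 0$, and the Hilbert--Schmidt measure has full support, the nonnegative continuous integrand must vanish pointwise: $\cinf_{AB}=0$ iff $\cinf_A(O_B)=0$ for every positive $O_B$. Because $\cinf_A(O_B)$ is a nonnegative quadratic form in the Pauli coefficients of $O_B$ (the identity component is killed by $\bP_A$, since $\idm_B(t)=\idm$), and the positive-definite operators form an open subset of the Hermitian operator space on $B$, vanishing on all positive $O_B$ forces this quadratic form to vanish identically. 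It therefore suffices to impose the vanishing for each $O_B$ and extract conditions on the fixed coefficient operators $\nu^{\alpha\beta}_{A^c}$.

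Next I would invoke the alternate form $\cinf_A(O_B)=\tr(\Theta\gamma(t))$ from \eqref{eq:altFormCinfDef}, where $\Theta,\gamma\succeq 0$. Writing $\bP_A O_B(t)=\sum_\beta \sigma_A^\beta\otimes u^\beta_{A^c}$ so that $\gamma=\sum_\beta |u^\beta_{A^c})(u^\beta_{A^c}|$, and using the spectral decomposition $\Theta=\sum_k\lambda_k|v^k)(v^k|$, I obtain $\tr(\Theta\gamma)=\sum_{k,\beta}\lambda_k|(v^k|u^\beta_{A^c})|^2$, a manifestly nonnegative sum. Hence $\cinf_A(O_B)=0$ iff $(v^k|u^\beta_{A^c})=0$ for every $\beta$ and every $k$ with $\lambda_k>0$. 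Since $u^\beta_{A^c}=\sum_\alpha c_\alpha\,\nu^{\alpha\beta}_{A^c}(t)$ and this must hold for all $O_B$ (all real coefficients $c_\alpha$), the condition collapses to $(v^k|\nu^{\alpha\beta}_{A^c})=0$ for every $\alpha,\beta\in\{X,Y,Z\}$ and every nonzero-eigenvalue direction $v^k$ of $\Theta$.

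I would then specialize to $d_A=2$ and read the nonzero eigenvectors of $\Theta$ off \cref{table:eigvalsOfPsi}: $v^1=p_1\ket{\phi_1}\!\bra{\phi_1}-p_2\ket{\phi_2}\!\bra{\phi_2}$ together with $v^2=\ket{\phi_1}\!\bra{\phi_2}$ and $v^3=\ket{\phi_2}\!\bra{\phi_1}$, whose eigenvalues are nonzero precisely when $p_1,p_2>0$ (the remaining direction $p_2\ket{\phi_1}\!\bra{\phi_1}+p_1\ket{\phi_2}\!\bra{\phi_2}$ lies in $\ker\Theta$ and imposes nothing). Evaluating the three overlaps $(v^k|\nu^{\alpha\beta}_{A^c})\propto\tr(v^{k\dagger}\nu^{\alpha\beta}_{A^c})$ and using that each $\nu^{\alpha\beta}_{A^c}$ is Hermitian (it is the coefficient operator in the unique Pauli-on-$A$ decomposition of the Hermitian operator $\sigma_B^\alpha(t)$), the $k=1$ overlap yields the diagonal balance $p_1\langle\phi_1|\nu^{\alpha\beta}_{A^c}|\phi_1\rangle=p_2\langle\phi_2|\nu^{\alpha\beta}_{A^c}|\phi_2\rangle$, while the $k=2,3$ overlaps are complex conjugates and both give $\langle\phi_1|\nu^{\alpha\beta}_{A^c}|\phi_2\rangle=0$. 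These are exactly the two families of conditions in the statement.

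I expect the main obstacle to be the first step: rigorously converting ``$\ex_{O_B}$ of a nonnegative quantity vanishes'' into pointwise conditions on the individual operators $\nu^{\alpha\beta}_{A^c}$, taking care that averaging over positive $O_B$ (rather than over a full Pauli basis) discards no constraints. The joint positive semidefiniteness of $\Theta$ and $\gamma$ is what makes this clean, since it turns the trace into a sum of squared overlaps that can only vanish term by term; the remaining work is routine inner-product bookkeeping once the $d_A=2$ spectrum of $\Theta$ from \cref{table:eigvalsOfPsi} is in hand.
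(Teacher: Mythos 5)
Your proposal is correct and follows essentially the same route as the paper's proof: full support of the Hilbert--Schmidt measure to reduce the averaged causal influence to pointwise vanishing, the alternate form $\tr(\Theta\gamma(t))$ with both factors positive semidefinite so the trace becomes a sum of squared overlaps, the $d_A=2$ spectrum of $\Theta$ from \cref{table:eigvalsOfPsi}, and linearity in the Pauli coefficients of $O_B$ to isolate the individual $\nu^{\alpha\beta}_{A^c}$. The only cosmetic difference is how the restriction to positive $O_B$ is lifted --- your open-set/quadratic-form argument versus the paper's explicit shift $\tilde O_B = \bigl(c\,\idm_B + O_B\bigr)/\bigl(d_B c + \tr(O_B)\bigr)$ --- and both are valid.
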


\begin{proof}
First, recall the alternate form of the causal influence, $\cinf_A(O_B)$, defined by \eqref{eq:altFormCinfDef},
\begin{align}
    \cinf_{AB} = \ex_{O_B} \cinf_A(O_B) = \ex_{O_B} \tr(\Theta\gamma(t)),
\end{align}
where $\Theta$ and $\gamma(t)$ are positive semidefinite superoperators that act on $\mathrm{End}(\cH_{A^c})$. The expectation value is computed by sampling $O_B$ from the Hilbert-Schmidt measure. This measure has the property that $\ex_{O_B}[\tr(\Theta\gamma(t))] = 0$ if and only if $\tr(\Theta\gamma(t)) = 0$ for all $O_B$ that are positive semidefinite and have unit trace. The next step is to find the conditions equivalent to the latter statement.

Let $\Theta = \sum_k \lambda_k |v^k_{A^c})(v^k_{A^c}|$ and $\gamma(t) = \sum_l \mu_l(t) |u^l_{A^c}(t))(u^l_{A^c}(t)|$ be the spectral decompositions of $\Theta$ and $\gamma$. As a consequence of $\Theta$ and $\gamma$ having nonnegative eigenvalues,
\begin{align}
    \tr(\Theta\gamma(t)) = \sum_{k,l} \lambda_k \mu_l(t) |(v^k_{A^c}|u^l_{A^c}(t))|^2 = 0
\end{align}
if and only if $(v^k|u^l(t)) = 0$ for all $k, l$ with $\lambda_k, \mu_l(t) > 0$. Going forward, we will restrict this sum to only its nonzero terms.

One of our assumptions is that $A$ consists of a single qubit. Thus, \cref{table:eigvalsOfPsi} gives all the eigenoperators of the superoperator $\Theta$,
\begin{align}
    v^1_{A^c} = p_1\ket{\phi_1}\bra{\phi_1} - p_2\ket{\phi_2}\bra{\phi_2},\quad v^2_{A^c} = \ket{\phi_1}\bra{\phi_2}, \quad v^3_{A^c} = \ket{\phi_2}\bra{\phi_1},
\end{align}
where the $\ket{\phi_{1}}$ and $\ket{\phi_2}$ are the eigenstates of the reduced density matrix $\rho_{A^c}$. We obtain the following set of conditions by substituting each eigenoperator into $(v^k|u^l(t)) = 0$,
\begin{align}\label{eq:cinf_zero_conds_general}
     p_1\braket{\phi_1|u^l_{A^c}(t)|\phi_1} &= p_2\braket{\phi_2|u^l_{A^c}(t)|\phi_2}, \quad \braket{\phi_1|u^l_{A^c}(t)|\phi_2} = 0,\quad \text{for all $l$}.
\end{align}
By this point in the proof, we have shown that \eqref{eq:cinf_zero_conds_general} form a set of necessary and sufficient conditions for $\cinf_{AB} = 0$.

Now define the operators $\nu^{jl}_{A^c}(t)$ via the decomposition
\begin{align}\label{eq:thmProofStep}
    \bP_A \sigma_B^\alpha(t) = \sum_{l=X,Y,Z} \sigma^l_A \otimes \nu^{\alpha\beta}_{A^c}(t),
\end{align}
and write
\begin{align}
    O_B &= c_\idm\idm_B + c_X X_B + c_Y Y_B + c_Z Z_B.
\end{align}
By computing $\bP_A O_B(t)$ using these two equations, we obtain the eigenoperators of $\gamma$,
\begin{align}\label{eq:thmProofStep2}
    u^l_{A^c}(t) &= \sum_{j=X,Y,Z} c_j \nu^{jl}_{A^c}(t).
\end{align}
Clearly, the condition that \eqref{eq:cinf_zero_conds_general} hold for all $l$ and $O_B$ includes the choice where we set all except one of the $c_j$'s to zero in \eqref{eq:thmProofStep}. By substituting \eqref{eq:thmProofStep2} into \eqref{eq:cinf_zero_conds_general} we obtain
\begin{align}\label{eq:thmProofStep3}
    p_1\braket{\phi_1|\nu^{jl}_{A^c}(t)|\phi_1} &= p_2\braket{\phi_2|\nu^{jl}_{A^c}(t)|\phi_2},\quad\braket{\phi_1|\nu^{jl}_{A^c}(t)|\phi_2} = 0.
\end{align}
for all $l$ and that specific $j$. In the other direction, if \eqref{eq:thmProofStep3} holds for all $j, l$, then by linearity we know that \eqref{eq:cinf_zero_conds_general} holds for all $l$ and $O_B$.

The proof is mostly complete, but there is one subtlety remaining: technically, we have proven the theorem but with the $O_B$ restricted to be positive semidefinite and have unit trace. So the final part of our proof is to show that one can safely remove these restrictions. The key point is that we can choose an operator $O_B$ whose minimum eigenvalue is $-c$ (with $c>0$), and form the new operator
\begin{align}
    \tilde{O}_B = \frac{c\idm_B + O_B}{d_B c + \tr(O_B)}.
\end{align}
This one, unlike $O_B$, is postive semidefinite and has $\tr(\tilde{O}_B) = 1$. To complete the proof, one notes that $\cinf_A(\tilde{O}_B) = 0$ if and only if $\cinf_A(O_B) = 0$.
\end{proof}

\section{The error-corrected causal influence}

In this section, we define the \emph{error-corrected causal influence} (ECI), a quantity designed to diagnose the propagation of physical errors in a quantum error-correcting (QEC) code undergoing logical time evolution. Standard QEC focuses on correcting errors after they occur, but an analysis of the system's causal structure can reveal inherent protections where certain physical errors are incapable of affecting logical information. We aim to quantify how a local unitary perturbation $V_A$ on a physical qubit $A$ causally affects a logical observable $\overline{O}_{\overline{B}}$ in a logical region $\overline{B}$ after evolution under a logical Hamiltonian $\overline{H}$. 

The ECI, denoted $\cinf_{A\overline{B}}$, is a direct generalization of the standard causal influence. We define it as:
\begin{equation}
    \cinf_{A\overline{B}} \equiv \mathbb{E}_{\overline{O}_{\overline{B}}} \text{Var}_{V_A} \text{tr}(\overline{O}_{\overline{B}} \cR[\rho])
\end{equation}
where $\rho \equiv U(t)V_A \ket{\Psi}\bra{\Psi} V_A^\dag U(t)^\dag$. Here, $V_A$ is a random unitary operator drawn from the Haar measure on physical region $A$, $\overline{O}_{\overline{B}}$ is a random Hermitian operator on the logical qubits in region $\overline{B}$, $U(t) = \exp(-i\overline{H}t)$ is the time-evolution operator generated by the logical Hamiltonian, and $\cR$ is the QEC recovery map. 

A technical subtlety arises because the error-corrected state $\cR[\rho]$ is not guaranteed to be pure, a common assumption we make throughout the paper. We can resolve this by working in the Heisenberg picture, where the recovery map acts on the observable instead of the state. The expression inside the trace becomes $\text{tr}(\cR^\dag[\overline{O}_{\overline{B}}] \rho)$, which casts the causal influence as a response function to the effective operator $\cR^\dag[\overline{O}_{\overline{B}}]$. The adjoint of the recovery map for a stabilizer code is given by
\begin{align}
    \cR^{\dag}[O_{\overline B}] = \sum_{s}\Pi_s C_s^{\dag} O_{\overline B} C_s \Pi_s,
\end{align}
where $\Pi_s$ is the projector onto the syndrome-$s$ subspace and $C_s$ is the corresponding correction. 

This formulation ensures that our main results about the standard causal influence remain applicable. For example, the derivation of the alternate form of the causal influence in \cref{app:altFormDerive} depends on the Pauli expansion of the observable $O_B$. This derivation holds for the effective operator $\cR^\dag[\overline{O}_{\overline{B}}](t)$, as its logic is insensitive to the specific values of the Pauli expansion coefficients. It is true, however, that the time evolution of the effective operator is not norm-preserving (i.e., $\tr(\cR^\dag[\overline O_{\overline{B}}](t)^2) \neq \tr(\overline O_{\overline{B}}^2)$), and so the causal influence will be scaled correspondingly. Fortunately, this does not affect the conclusions of \cref{thm:acausal}, which is independent of the observable's normalization. Indeed, by making the replacement
\begin{align}
	\sigma^\alpha_x(\Delta t) &\mapsto \cR^\dag[\overline \sigma^\alpha_x](\Delta t) = \idm_q\otimes\nu^{\alpha0}_{q^c}(\Delta t) +\sum_{\beta=X,Y,Z} \sigma^\beta_q \otimes \nu^{\alpha\beta}_{q^c}(\Delta t)
\end{align}
in \eqref{eq:thm_state_decomp}, we recover \cref{thm:acausal} for the ECI. In this new version of the theorem, $x$ labels a \emph{logical qubit}, and $q$ a \emph{physical qubit}. Let us now proceed to apply \cref{thm:acausal} for the logical Hamiltonian evolution presented in the main text.

\subsection{The perfect $[[5,1,3]]$ code example}

To provide a concrete example of physical-to-logical causal protection, we apply \cref{thm:acausal} to a system of $k$ logical qubits. Each logical qubit is encoded using the perfect $[[5,1,3]]$ quantum error-correcting code. The goal is to identify initial states $|\Psi\rangle$ for which a single-qubit physical error on one logical block has no causal influence on another block after a time step $t$.

The five-qubit code is defined by the stabilizer generators
\begin{align*}
	X_{5j-4}Z_{5j-3}Z_{5j-2}X_{5j-1}\idm_{5j}, \quad \idm_{5j-4}X_{5j-3}Z_{5j-2}Z_{5j-1}X_{5j} \\
	X_{5j-4}\idm_{5j-3}X_{5j-2}Z_{5j-1}Z_{5j}, \quad Z_{5j-4}X_{5j-3}\idm_{5j-2}X_{5j-1}Z_{5j}
\end{align*}
for the $j$-th block of physical qubits, where $j=1,2, \dots, k$. The corresponding logical Pauli operators are products of single-qubit Paulis over three qubits of each block:
\begin{align}
	\overline{X}_j &= X_{5j-4} Y_{5j-2} Y_{5j-1}\\
	\overline{Z}_j &= Z_{5j-4}Y_{5j-3}Y_{5j-2}
\end{align}

We consider a nearest-neighbor logical Hamiltonian
\begin{align}
    \overline{H} = \sum_{j=1}^{k-1} \overline X_j \overline X_{j+1}
\end{align}
\cref{thm:acausal} requires us to evaluate matrix elements of the operators
\begin{align}
	\nu^{\alpha\beta}_{q^c}(t) &= \frac{1}{2}\,\tr_q\!\left(\sigma^\beta_q \cR^\dag[\overline \sigma^\alpha_x](t)\right),
\end{align}
where the adjoint of the recovery map $\cR^\dag$ for a stabilizer code acts on a logical operator $\overline{O}$ as
\begin{align}
    \cR^\dag[\overline{O}] = \Pi_{\text{code}} \overline{O} \Pi_{\text{code}} + \sum_{s \neq 0} \Pi_s C_s^\dag \overline{O} C_s \Pi_s,
\end{align}
where $\Pi_s$ projects onto the subspace with error syndrome $s$ and $\Pi_0 = \Pi_{\mathrm{code}}$ is the code space projector. 

Each syndrome $s$ in the recovery operation corresponds to a specific single-qubit error $C_s$. Suppose we take $\overline O$ to be a logical Pauli operator. Both being Pauli operators, $C_s$ and $\overline O$ can either commute or anticommute. We can use the completeness identity $\idm = \sum_s \Pi_s$ to eliminate all the commuting error terms from the sum,
\begin{align}
    \cR^\dag[\overline{O}] &= \Pi_{0} \overline{O} \Pi_{0} + \sum_{s\,:\, [C_s, \overline O] = 0} \Pi_s \overline O \Pi_s - \sum_{s\,:\, \{C_s, \overline O\} = 0} \Pi_s \overline{O} \Pi_s\\
    &= \overline{O}\left(\Pi_{0} + \sum_{s\,:\, [C_s, \overline O] = 0} \Pi_s - \sum_{s\,:\, \{C_s, \overline O\} = 0} \Pi_s \right)\\
    &= \overline{O}\left(\idm - 2\sum_{s\,:\, \{C_s, \overline O\} = 0} \Pi_s \right),\label{eq:recOpsQEC}
\end{align}
where in the second line we used the fact that any logical operator must commute with the stabilizers.

Let us take $q$ to be a physical site in a code block distinct from our ``target block'' $j$. Our goal is to identify states for which there is no causal influence from physical qubit $q$ on logical qubit $j$. To do this, we first perform a simplification of $\nu^{\alpha\beta}_{q^c}$ using properties of the code. This simplification starts by recalling that the conditions of \cref{thm:acausal} depend on matrix elements of the form $\braket{\phi_a | \nu_{q^c} | \phi_b}$, where $\ket{\phi_a}$ are defined in \cref{thm:acausal} by \eqref{eq:thm_schmidt}. With this in mind, let us break down the simplification into a few different steps:
\begin{enumerate}
	\item The $\ket{\phi_i}$ are states on $q^c$, so we may commute them with a partial trace over $q$. Applied to the relevant matrix elements, we have
	\begin{align}
		\braket{\phi_a | \nu^{\alpha\beta}_{q^c}(t) | \phi_b} &= \frac{1}{2}\,\tr_q\!\left(\sigma^\beta_q \braket{\phi_a|\cR^\dag[\overline \sigma^\alpha_j](t)|\phi_b}\right).
	\end{align}
	\item The evolution operator $U(t) = \exp(-i\overline{H}t)$ commutes with every stabilizer and therefore commutes with all syndrome projectors $\Pi_s$. This means that we can commute $U(t)$ through the summation in \eqref{eq:recOpsQEC}, obtaining $\cR^\dag[\overline \sigma^\alpha_j](t) = \overline\sigma^\alpha_j(t)\left(\idm - 2\sum_{s\,:\, \{C_s, \overline\sigma^\alpha_j\} = 0} \Pi_s \right)$.
	\item The matrix element $\braket{\phi_a | \nu^{\alpha\beta}_{q^c}(t) | \phi_b}$ now contains the terms $\Pi_s\ket{\phi_b}$ for syndrome $s$ with $\{C_s, \overline\sigma^\alpha_j\} = 0$. The anticommutation of $C_s$ with $\overline\sigma^\alpha_j$ implies that $C_s$ acts trivially outside of block $j$, and therefore so does $\Pi_s$. Thus, the $\Pi_s$ commute with the partial trace over $q$, which resides in a different block.
    
    Noting that $|\phi_b\rangle$ are eigenvectors of the reduced state $\rho_{q^c} = \tr_q(\ket{\Psi}\bra{\Psi})$, we have
	\begin{align}
		\Pi_s \rho_{q^c} = \tr_q(\Pi_s\ket{\Psi}\bra{\Psi}) = 0,
	\end{align}
    where the second equality uses the fact that $\ket{\Psi}$ is a logical state. This implies that the support of $\rho_{q^c}$ is orthogonal to all error subspaces of block $j$, and thus the Schmidt vectors are also annihilated: $\Pi_s \ket{\phi_b} = 0$. We conclude that the $\braket{\phi_a | \nu^{\alpha\beta}_{q^c}(t) | \phi_b}$ do not depend on the recovery process $\cR$.
\end{enumerate}
To summarize the simplification, we step-by-step justified the following string of equalities:
\begin{align}
	\braket{\phi_a | \nu^{\alpha\beta}_{q^c}(t) | \phi_b} &= \frac{1}{2}\,\tr_q\!\left(\sigma^\beta_q \braket{\phi_a|\cR^\dag[\overline \sigma^\alpha_j](t)|\phi_b}\right)\\
	&= \frac{1}{2}\,\tr_q\!\left(\sigma^\beta_q \braket{\phi_a|\overline\sigma^\alpha_j(t)\left(\idm - 2\textstyle\sum_{s\,:\, \{C_s, \overline\sigma^\alpha_j\} = 0} \Pi_s \right)|\phi_b}\right)\\
	&= \frac{1}{2}\,\tr_q\!\left(\sigma^\beta_q \braket{\phi_a|\overline \sigma^\alpha_j(t)|\phi_b}\right)
\end{align}
We can now pull out the $\braket{\phi_a|\cdot|\phi_b}$ and write
\begin{align}\label{eq:nuQEC}
	\nu^{\alpha\beta}_{q^c}(t) &= \frac{1}{2}\,\tr_q\!\left(\sigma^\beta_q\overline \sigma^\alpha_j(t)\right).
\end{align}
The next step is to compute the Heisenberg evolution of the logical Pauli operators $\overline \sigma^\alpha_j$.

The Heisenberg evolution of the logical Pauli operators under $\overline{H}$ mirrors that of the 1D transverse-field Ising model. We have
\begin{align}
\overline{Y}_j(t) &= \cos^2(2t)\overline{Y}_j - \sin^2(2t)\overline{X}_{j-1}\overline{Y}_j\overline{X}_{j+1} - \sin(2t)\cos(2t)(\overline{X}_{j-1}\overline{Z}_j + \overline{Z}_j\overline{X}_{j+1}) \\
\overline{Z}_j(t) &= \cos^2(2t)\overline{Z}_j - \sin^2(2t)\overline{X}_{j-1}\overline{Z}_j\overline{X}_{j+1} + \sin(2t)\cos(2t)(\overline{X}_{j-1}\overline{Y}_j + \overline{Y}_j\overline{X}_{j+1})
\end{align}
where we now ignore the initial operator choice of $\overline X_j$ because it is a symmetry of the Hamiltonian. WLOG we choose the logical qubit $j = 2$ and the physical qubit $q = 1$ located in the first code block.

Now we apply \eqref{eq:nuQEC} with $q = 1$, giving
\begin{align}
	\nu^{YX}_{1^c} &= -\frac{1}{2}\sin(4t) Y_3Y_4\left[\tan(2t)\overline{Y}_2\overline{X}_3 + \overline{Z}_2\right],\\
	\nu^{ZX}_{1^c} &= -\frac{1}{2}\sin(4t)Y_3Y_4\left[\tan(2t)\overline{Z}_2\overline{X}_3 - \overline{Y}_2\right].
\end{align}
Note that $Y_3 Y_4$ acts on the third and fourth \emph{physical} qubits of the first code block, and the barred operators act at the logical level.

If $t$ is an integer or half-integer multiple of $\pi$, then $\nu^{YX}_{1^c} = \nu^{ZX}_{1^c} = 0$ and the Theorem conditions are trivially satisfied by any $\ket{\Psi}$. Otherwise, we drop the shared prefactor and write
\begin{align}\label{eq:qecEigops}
\begin{split}
    \nu^{YX}_{1^c} &\propto Y_3Y_4\left[\tan(2t)\overline{Y}_2\overline{X}_3 + \overline{Z}_2\right],\\
	\nu^{ZX}_{1^c} &\propto Y_3Y_4\left[\tan(2t)\overline{Z}_2\overline{X}_3 - \overline{Y}_2\right].
\end{split}
\end{align}
Our final step is to use an ansatz for the initial state $\ket{\Psi}$ and find a family of solutions that satisfy all of
\begin{align}
p_1\langle\phi_1|\nu^{YX}_{1^c}|\phi_1\rangle =p_2\langle\phi_2|\nu^{YX}_{1^c}|\phi_2\rangle
\,\, \text{and}\,\,
\langle\phi_1|\nu^{YX}_{1^c}|\phi_2\rangle&=0,\\
p_1\langle\phi_1|\nu^{ZX}_{1^c}|\phi_1\rangle =p_2\langle\phi_2|\nu^{ZX}_{1^c}|\phi_2\rangle
\,\, \text{and}\,\,
\langle\phi_1|\nu^{ZX}_{1^c}|\phi_2\rangle&=0,
\end{align}
where $p_1, p_2$ and $\ket{\phi_1}$, $\ket{\phi_2}$ are defined in the Theorem statement.

We consider an initial state with no logical entanglement between the first block and the others:
$$|\Psi\rangle = (a|\overline{0}\rangle_1 + b|\overline{1}\rangle_1) \otimes |\Psi_{23}\rangle,$$
where $a, b \in \mathbb{C}$ and $|a|^2+|b|^2=1$. The Schmidt decomposition with respect to qubit $q=1$ yields two equally weighted Schmidt vectors $|\phi_1\rangle$ and $|\phi_2\rangle$. The diagonal conditions of \cref{thm:acausal} are automatically satisfied by the structure of the operators,
\begin{equation}
	\langle\phi_1|\nu^{\alpha\beta}_{1^c}|\phi_1\rangle = \langle\phi_2|\nu^{\alpha\beta}_{1^c}|\phi_2\rangle = 0
\end{equation}
The off-diagonal constraints are nontrivial, giving
\begin{align}
ab\,\langle\Psi_{23}|\left[\tan(2t)\,\overline{Y}_2\overline{X}_3 + \overline{Z}_2\right]|\Psi_{23}\rangle = 0,\quad\text{and}\quad
ab\,\langle\Psi_{23}|\left[\tan(2t)\,\overline{Z}_2\overline{X}_3 - \overline{Y}_2\right]|\Psi_{23}\rangle = 0
\end{align}
As discussed in the main manuscript, the solutions fall into three families:
\begin{enumerate}
	\item If $ab=0$, the first logical qubit is in a $\overline{Z}_1$ eigenstate. It has no quantum coherence to be disturbed, so no information can propagate.
	\item If $|\Psi_{23}\rangle$ is a product state where the second logical qubit is an eigenstate of $\overline{X}_2$ (e.g., $|\overline{\pm}\rangle_2|\Psi\rangle_3$), the expectation values vanish.
	\item Genuinely entangled states of blocks 2 and 3 can satisfy the conditions. Notable examples include the Bell-like state $\frac{1}{\sqrt{2}}(\ket{\overline{00}}+\ket{\overline{11}})$ and the one-parameter family $\frac{1}{\sqrt{2}}\bigl(\cos(t+\tfrac\pi4)\ket{\overline{00}}+i\cos(t-\tfrac\pi4)\ket{\overline{11}}\bigr)$.
\end{enumerate}
This example demonstrates that even when logical information is being dynamically scrambled, the structure of logical entanglement can create inherent protections against the propagation of physical errors, a feature that can be identified by the framework of \cref{thm:acausal}. In the next section, we will examine a symmetry in the solutions of this example that identify the logical entanglement relevant to acausal features.

\subsection{Symmetries in acausal logical states}

Having explored a concrete example of applying the theorem to compute vanishing ECI, we now examine a structural feature of the theorem: a symmetry arising from the algebraic properties of the $\nu^{\alpha\beta}_{A^c}$ operators. This symmetry serves as a valuable tool for simplifying the analysis of the solution states $\ket{\phi_1}$ and $\ket{\phi_2}$. In particular, it enables us to reduce any long-range entangled logical state solution in the example to a pure logical state confined to the code blocks $j-1, j, j+1$. It is the logical entanglement among these three blocks that determines the acausal features of the previous example, rather than long-range correlations with other parts of the system.

Consider a unitary operator $W\in\mathrm{End}(\cH_{A^c})$ that commutes with the operators of \cref{thm:acausal},
\begin{align}\label{eq:uniBlock}
	[W, \nu^{\alpha\beta}_{A^c}] &= 0,\quad\text{for all $\alpha,\beta$}.
\end{align}
We have 
\begin{align}
	p_1\langle\phi_1|\nu^{\alpha\beta}_{A^c}|\phi_1\rangle = p_2\langle\phi_2|\nu^{\alpha\beta}_{A^c}|\phi_2\rangle,\,\,\text{and}\,\,\braket{\phi_1|\nu^{\alpha\beta}_{A^c}|\phi_2} = 0
\end{align}
if and only if
\begin{align}\label{eq:solutsSymm}
	p_1\langle\phi_1| W^\dag\nu^{\alpha\beta}_{A^c}W|\phi_1\rangle =p_2\langle\phi_2|W^\dag\nu^{\alpha\beta}_{A^c}W|\phi_2\rangle,\,\,\text{and}\,\,\braket{\phi_1|W^\dag\nu^{\alpha\beta}_{A^c}W|\phi_2} = 0.
\end{align}

In our previous example, the operators under consideration were $\nu^{YX}_{A^c}$ and $\nu^{ZX}_{A^c}$ as given by \eqref{eq:qecEigops}. For convenience we rename them to $O$ and $Q$, respectively, and renormalize them such that each squares to the identity:
\begin{align}
\begin{split}\label{eq:thmSymOps}
	O &\equiv \frac{1}{r}Y_{5j-7}Y_{5j-6}\left[a\overline{Y}_j\overline{X}_{j+1} + \overline{Z}_j\right],\\
	Q &\equiv \frac{1}{r}Y_{5j-7}Y_{5j-6}\left[a\overline{Z}_j\overline{X}_{j+1} - \overline{Y}_j\right],
\end{split}
\end{align}
where $a \equiv \tan(2t)$ and $r \equiv \sqrt{1 + a^2}$.

In the following lemma, we prove that there is always some $W$ that allows us to disentangle a logical state solution of \cref{thm:acausal} into a logical product state across the support of $O$ and $Q$, and the rest of the system. Practically, the lemma provides formal justification for using solution ansatz restricted to the support of $O$ and $Q$.

\begin{lemma}\label{thm:logicalDisent1}
	For any logical state $\ket{\Psi}$ on the full system of three consecutive code blocks plus any additional qubits, there exists a unitary operator $\idm_{A}\otimes W$ (acting as identity on the first physical qubit $A$ of the first block and commuting with $O$ and $Q$) and a product logical state $\ket{\Phi}\otimes\ket{\chi}$ (with $\ket{\Phi}$ on the three consecutive code blocks and $\ket{\chi}$ on the additional qubits), such that
	\begin{align*}
	\left(\idm_{A}\otimes W\right)\ket{\Psi} &= \ket{\Phi}\otimes\ket{\chi}.
	\end{align*}
\end{lemma}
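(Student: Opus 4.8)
The plan is to build $W$ out of the commutant of the two operators $O$ and $Q$ in \eqref{eq:thmSymOps}, exploiting the fact that they generate a single ``virtual qubit.'' First I would record their algebra. A short Pauli computation shows $O^2=Q^2=\idm$ and $\{O,Q\}=0$, with $OQ=i\overline{X}_j$: the shared prefactor $Y_{5j-7}Y_{5j-6}$ squares to the identity and commutes with all logical factors, while $\overline{Y}_j$ and $\overline{Z}_j$ anticommute on logical qubit $j$. Hence $\{O,Q,iOQ\}$ is a Pauli-like triple and the unital $*$-algebra $\mathcal{A}\equiv\langle O,Q\rangle\cong M_2(\mathbb{C})$ is a full matrix algebra. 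Crucially $\mathcal{A}$ is supported on the three-block region with $A$ removed, so it acts trivially both on $A$ and on the additional qubits $E$.

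Next I would invoke the structure theorem for a full matrix subalgebra: $\mathcal{A}\cong M_2(\mathbb{C})$ induces a tensor factorization $\mathcal{H}_{A^c}\cong\mathbb{C}^2_{\mathcal{A}}\otimes\mathcal{H}_0$ in which $\mathcal{A}=\mathcal{B}(\mathbb{C}^2_{\mathcal{A}})\otimes\idm$ and the commutant is $\mathcal{A}'=\idm\otimes\mathcal{B}(\mathcal{H}_0)$. Because $O$ and $Q$ have no support on $E$ (nor on $A$), the factor $\mathcal{H}_E$ embeds entirely inside $\mathcal{H}_0$, so I can write $\mathcal{H}_0=\mathcal{H}_{R''}\otimes\mathcal{H}_E$. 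Any unitary $W$ acting only on $\mathcal{H}_0$ then automatically satisfies $[W,O]=[W,Q]=0$, i.e.\ the hypothesis \eqref{eq:uniBlock}, and $\idm_A\otimes W$ manifestly fixes $A$. What remains is to choose $W$ within $\mathcal{B}(\mathcal{H}_0)$ so that it disentangles the state.

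For the disentangling step I would Schmidt-decompose $\ket{\Psi}$ across the cut (three code blocks)$\,|\,E$. Grouping $A$ and the virtual qubit $\mathbb{C}^2_{\mathcal{A}}$ together with $\mathcal{H}_{R''}$ as the ``left'' factor, one has a tripartite state with unitary freedom $W$ on $\mathcal{H}_{R''}\otimes\mathcal{H}_E$; a $W$ that maps the orthonormal $E$-side Schmidt vectors onto a single fixed vector while transferring their orthogonality into $\mathcal{H}_{R''}$ yields a product $\ket{\Phi}_R\otimes\ket{\chi}_E$. On dimensional grounds such a $W$ exists, since the Schmidt rank across the left cut is at most $\dim(\mathcal{H}_A\otimes\mathbb{C}^2_{\mathcal{A}})=4$, whereas $\dim\mathcal{H}_{R''}$ is exponentially larger.

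The hard part is reconciling all three demands on $W$ at once: lying in $\mathcal{A}'$, fixing $A$, and sending $\ket{\Psi}$ to a product of genuine \emph{logical} states. The tension is that $O$ and $Q$ contain the physical factor $Y_{5j-7}Y_{5j-6}$, which is not code-preserving on its own, so the virtual-qubit factorization $\mathbb{C}^2_{\mathcal{A}}\otimes\mathcal{H}_0$ need not align with the logical code space, and a generic disentangling $W$ would leave it. I expect this to be the main obstacle, and the key to resolving it is the identity $Y_{5j-7}Y_{5j-6}=X_A\overline{X}_{j-1}$: once $A$ is held fixed, the virtual qubit aligns with logical data confined to blocks $j-1,j,j+1$, leaving all logical operations on $E$ — plus the surplus logical room in $\mathcal{H}_{R''}$ associated with blocks $j,j+1$ — available inside $\mathcal{A}'$ to implement a code-preserving disentangler. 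Establishing this compatibility, and hence that the rotation above can be carried out without leaving the code space, is the crux; the symmetry \eqref{eq:solutsSymm} then guarantees that the disentangled $\ket{\Phi}\otimes\ket{\chi}$ is again a solution of \cref{thm:acausal}.
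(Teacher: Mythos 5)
Your first half is sound and coincides with the paper's own setup: the algebra $O^2=Q^2=\idm$, $\{O,Q\}=0$, $OQ=i\overline{X}_j$ is computed correctly, the factorization $\mathcal{H}_{\mathrm{in}}\cong\mathcal{H}_{\mathrm{eff}}\otimes\mathcal{K}$ with commutant $\mathcal{A}'=\idm_{\mathrm{eff}}\otimes\mathcal{B}(\mathcal{K})$ is exactly the paper's first step, and your reduction of the problem to finding a product state with the correct reduced density matrix on $\mathcal{H}_A\otimes\mathcal{H}_{\mathrm{eff}}$ (the unitary $W$ then being supplied by the unitary freedom of purifications) is also the paper's strategy. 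However, the step you yourself label ``the crux'' is a genuine gap, not a detail: your Schmidt-rank/dimension argument only produces \emph{some} purification $\ket{\Phi}$ of $\rho_{A,\mathrm{eff}}$ inside $\mathcal{H}_A\otimes\mathcal{H}_{\mathrm{eff}}\otimes\mathcal{H}_{R''}$, and a generic such purification lies outside the code space, whereas the lemma demands that $\ket{\Phi}\otimes\ket{\chi}$ be a \emph{logical} product state. You correctly identify the key identity $X_A\,Y_{5j-7}Y_{5j-6}=\overline{X}_{j-1}$, but you never convert it into a proof that a logical purification exists; asserting that ``the virtual qubit aligns with logical data'' is precisely the statement that needs to be established.

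The paper closes this gap in three concrete steps that are absent from your proposal. First, it shows by direct calculation that for a logical state the only nonvanishing expectation values on $\mathcal{H}_A\otimes\mathcal{H}_{\mathrm{eff}}$ are $\braket{X_AX_{\mathrm{eff}}}$, $\braket{X_AY_{\mathrm{eff}}}$, and $\braket{Z_{\mathrm{eff}}}$; every other Pauli combination has zero expectation in any code state (and will automatically vanish on the candidate $\ket{\Phi}$ as well, once $\ket{\Phi}$ is taken logical), so only these three numbers need to be matched. Second, using your identity, these three numbers are expectations of the \emph{logical} three-block operators $\overline{X}_{j-1}\overline{O}_{j,j+1}$, $\overline{X}_{j-1}\overline{Q}_{j,j+1}$, and $\overline{X}_j$, which form a second Pauli triple and hence define a logical effective qubit $L_{\mathrm{eff}}$ together with a two-dimensional auxiliary factor $\mathcal{K}'$ built from the logical space of blocks $j-1,j,j+1$. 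Third, matching the reduced data therefore reduces to purifying a single $2\times 2$ density matrix inside $L_{\mathrm{eff}}\otimes\mathcal{K}'$, which succeeds because $\dim\mathcal{K}'=2$ is at least the rank; any logical $\ket{\chi}$ on the extra blocks then completes the product, and the purification-uniqueness argument yields the required $W$. Without these steps --- i.e., without showing that the reduced data of a logical state is small enough, and of the right (logical-operator) form, to be reproduced \emph{inside} the code space --- your construction establishes only a weaker statement in which $\ket{\Phi}\otimes\ket{\chi}$ need not be logical, which is not the lemma.
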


\begin{proof}
Let \(\ket{\Psi}\) be an arbitrary pure logical state in the total logical codespace \(L_\text{total} \equiv L_{j-1} \otimes L_j \otimes L_{j+1} \otimes L_\text{extra}\), where each \(L_k\) (for \(k = j-1, j, j+1\)) is the 2-dimensional logical subspace of the \([[5,1,3]]\) code block \(k\), and \(L_\text{extra}\) is the logical subspace of any additional code blocks. The goal is to prove the existence of a unitary \(\idm_{A} \otimes W\) on the underlying physical Hilbert space (spanned by all physical qubits across the blocks and extras), with $A$ the first physical qubit of block $j-1$, that commutes with \(O\) and \(Q\) and disentangles \(\ket{\Psi}\) into \(\ket{\Phi} \otimes \ket{\chi}\) for some pure \(\ket{\Phi} \in L_{j-1} \otimes L_j \otimes L_{j+1}\) and \(\ket{\chi} \in L_\text{extra}\).

The operators \(O\) and \(Q\) are defined on the subspace \(\mathcal{H}_\text{in} \equiv \bigotimes_{i=2}^5 \mathcal{H}_{q_i} \otimes L_j \otimes L_{j+1} \otimes L_\text{extra}\), where \(\bigotimes_{i=2}^5 \mathcal{H}_{q_i}\) is the physical space of qubits 2 through 5 in block \(j-1\). On this domain, \(O\) and \(Q\) act as the physical operator \(Y_{5j-7} Y_{5j-6}\) on the specified qubits in block \(j-1\), as logical operators \(\overline{O}_{j,j+1} \equiv (a \overline{Y}_j \overline{X}_{j+1} + \overline{Z}_j)/r\) and \(\overline{Q}_{j,j+1} \equiv (a \overline{Z}_j \overline{X}_{j+1} - \overline{Y}_j)/r\) on \(L_j \otimes L_{j+1}\), and as the identity on \(L_\text{extra}\). They extend to \(H_\text{total}\) by acting as the identity on \(\mathcal{H}_{A}\).

To proceed, note that \(O\), \(Q\), and \(\overline{X}_j\) form a Pauli-like algebra on \(\mathcal{H}_\text{in}\): $O^2=Q^2=\overline{X}_j^2=\idm$, with anticommutators $\{O, Q\}=\{Q, \overline{X}_j\}=\{\overline{X}_j, O\}=0$ and commutation relations $O Q=i \overline{X}_j, Q \overline{X}_j=i O, \overline{X}_j O=i Q$. This implies $\cH_{\mathrm{in}}$ decomposes as $\cH_{\mathrm{in}} \cong \cH_{\mathrm{eff}} \otimes \cK$ for a 2D effective qubit $\cH_{\mathrm{eff}}$ and auxiliary $\mathcal{K}$, where $O \equiv X_{\mathrm{eff}}\otimes\idm_\cK$, $Q \equiv Y_{\mathrm{eff}}\otimes\idm_\cK$, and $\overline X_j \equiv Z_{\mathrm{eff}}\otimes\idm_\cK$. Any unitary $W$ commuting with $O$ and $Q$ acts as $W = \idm_{\mathrm{eff}}\otimes W_\cK$. 

Thus, we must find $W_\mathcal{K}$, $\ket{\Phi}$, and $\ket{\chi}$ such that $(\idm_{A} \otimes \idm_\text{eff} \otimes W_\mathcal{K})\ket{\Psi} = \ket{\Phi} \otimes \ket{\chi}$. This is equivalent to finding $\ket{\Phi}\otimes\ket{\chi}$ such that $\rho_{A,\mathrm{eff}} = \tr_\cK(\ket{\Phi}\bra{\Phi}\otimes\ket{\chi}\bra{\chi})$, where $\rho_{A,\mathrm{eff}} \equiv \tr_\cK(\ket{\Psi}\bra{\Psi})$. In the next step, we will prove the existence of such a state $\ket{\Phi}$ by showing $\braket{\Psi|\sigma^\alpha_A\otimes\sigma^\beta_{\mathrm{eff}}\otimes\idm_\cK|\Psi} = \braket{\Phi|\sigma^\alpha_A\otimes\sigma^\beta_{\mathrm{eff}}\otimes\idm_\cK|\Phi}$ for all $\alpha, \beta\in\{X, Y, Z\}$.

Using that $\ket{\Psi}$ is a logical state, one can see (with a direct calculation) that there are exactly three nonzero expected values in $\cH_A\otimes\cH_\mathrm{eff}$; namely, $\braket{X_{A} X_\text{eff}}$, $\braket{X_{A} Y_\text{eff}}$, and $\braket{Z_\text{eff}}$. This is also true for the $\ket{\Phi}$ we eventually construct, because it too is logical. Hence we only need $\ket{\Phi}$ to agree with $\ket{\Psi}$ on those three operators---all other expectation values of the form $\braket{\sigma^\alpha_A\otimes\sigma^\beta_{\mathrm{eff}}\otimes\idm_\cK}$ are automatically zero on both states.

We now perform a trick using the fact that $X_A = X_{5j-9}$ together with $Y_{5j-7}Y_{5j-6}$ form the logical operator $\overline X_{j-1}$. More precisely, we write $\braket{X_{A} X_\text{eff}} = \braket{\overline{X}_{j-1} \overline{O}_{j,j+1}}$, $\braket{X_{A} Y_\text{eff}} = \braket{\overline{X}_{j-1} \overline{Q}_{j,j+1}}$, and $\braket{Z_\text{eff}} = \braket{\overline{X}_j}$, where the expectation value is taken with respect to any logical state. The operators $\{\idm, \overline X_{j-1} \overline O_{j,j+1}, \overline X_{j-1} \overline Q_{j,j+1}, \overline X_j\}$ generate a Pauli algebra on $\cH_A\otimes\cH_\mathrm{eff}$, yielding $\cH_A\otimes\cH_\mathrm{eff} \cong L_\mathrm{eff}\otimes \cK'$ where $L_\mathrm{eff}, \cK' \subset L_{j-1}\otimes L_j \otimes L_{j+1}$ are the \emph{logical} effective qubit and auxiliary subspaces, respectively.

We can apply our previous argument to this new effective qubit algebra: the expectation values from the previous paragraph match on both states if and only if $\tr_{\cK'}(\rho_{A,\mathrm{eff}}) = \tr_{\cK'}(\ket{\Phi} \bra{\Phi})$. Given that $\tr_{\cK'}(\rho_{A,\mathrm{eff}})$ is a $2\times2$ density matrix and $\dim \mathcal{K}' = 2$, the mixed state $\tr_{\cK'}(\rho_{A,\mathrm{eff}})$ admits a purification $\ket{\Phi} \in L_\text{eff} \otimes \mathcal{K}'$. Therefore, the state $\ket{\Phi}$ matches $\ket{\Psi}$ on the nonzero expectation values (i.e., $\braket{X_{A} X_\text{eff}}$, $\braket{X_{A} Y_\text{eff}}$, $\braket{X_{A} Z_\text{eff}}$) and vanishes on the others (i.e., all other $\braket{\sigma^\alpha_A\otimes\sigma^\beta_{\mathrm{eff}}\otimes\idm_\cK}$). We are free to choose any $\ket{\chi}\in L_\mathrm{extra}$, and then we finally obtain $\rho_{A,\mathrm{eff}} = \tr_\cK(\ket{\Phi}\bra{\Phi}\otimes\ket{\chi}\bra{\chi})$.
\end{proof}

\section{Causal influences of syndrome extraction and correction}

In this section, we consider general $[[n, k]]$ stabilizer codes that correct single-qubit errors. In such codes, we refer to the $n$ qubits encoding logical information as \emph{physical qubits}. Sometimes we need to bring in \emph{ancilla qubits} to facilitate quantum error correction processes. We are interested in computing causal influences exerted by logical or physical qubits on logical or ancilla qubits. For example, the physical-to-ancilla causal influence ($\cinf_{\phys,\anc}$) measures how much a unitary ``kick'' on a single physical qubit affects the state of the ancilla qubits. Physically, it quantifies the flow of information from a potential error location to the error-detection system.

The dynamics of this problem are taken to be that of a single cycle of syndrome measurement and error correction. This is implemented via the formal process of bringing in ancilla qubits, carefully coupling them to the physical qubits, and measuring the ancilla qubits. Based on the measurement result, we apply a correction operator to the physical qubits. Formally, we define two quantum channels:
\begin{align}
	\cR'[\rho] = \sum_{s,t} C_s \Pi_s \rho \Pi_t C_t^\dag \otimes \ket{s}\bra{t}_{\mathrm{anc}}\,,\quad \cR[\rho] &= \tr_{\mathrm{anc}} \cR'[\rho]\,,
\end{align}
where $\Pi_s$ projects onto the syndrome subspace and $C_s$ is the unitary correction operator applied given syndrome $s$.

The channel $\cR$ is the standard recovery channel for a stabilizer code. The channel $\cR'$ is a dilated form of the recovery channel, before the ancilla qubits are measured. Whether or not we measure the ancillas does not alter $\cR[\rho]$. Note that in this formal analysis, the implementation details of $\cR'$ are abstracted out; for a concrete example with an explicit quantum circuit, please refer to \cref{fig:qec_cycle}. The process of determining $C_s$ from syndrome $s$ is known as ``syndrome decoding.'' Such details are not relevant in the formal analysis here, but it is possible that the decoding algorithm becomes relevant in less-than-ideal scenarios, such as imperfect decoding. We leave such analysis to future work.

As a final note, we define
\begin{align}
    \cR'_{\mgate}[\rho] = \sum_{s} C_s \Pi_s \rho \Pi_s C_s^\dag \otimes \ket{s}\bra{s}_{\mathrm{anc}}
\end{align}
as the version of $\cR'$ where the ancillas are measured in the $\{\ket{s}\}$ basis.

\subsection{Stabilizer code: formal results}

We are interested in the following response functions:
\begin{align}
\begin{split}
    M_{LL}(V_L, O_L) &\equiv \tr\left(O_L \cR(V_L\rho V_L^\dag)\right)\\
    M_{L,\anc}(V_L, O_\anc) &\equiv \tr\left(O_\anc \cR'(V_L\rho V_L^\dag)\right) \\
    M_{\phys,L}(V_j, O_L) &\equiv \tr\left(O_L \cR(V_j\rho V_j^\dag)\right)\\
    M_{\phys,\anc}(V_j, O_\anc) &\equiv \tr\left(O_\anc \cR'(V_j\rho V_j^\dag)\right)
\end{split}
\end{align}
where $V_j$ is a physical unitary on some physical qubit $j$. The subscripts denote logical-logical, logical-ancilla, physical-logical, and physical-ancilla, respectively.

\subsubsection{Logical-to-Ancilla Causal Influence}
A perturbation $V_L$ applied within the logical codespace has no effect on the error syndrome. The response function is independent of $V_L$, leading to a vanishing causal influence:
\begin{equation}
    \cinf_{L,\anc} = 0 \quad\text{(pre- and post-measurement)}.
\end{equation}
Physically, this means the error correction process is properly designed to be insensitive to logical operations, preventing any information about logical gates from leaking to the syndrome measurement ancillas.

\subsubsection{Physical-to-Logical Causal Influence}
For a code that corrects all single-qubit errors, any such error is perfectly reversed by the recovery channel $\cR$. A kick $V_j$ on a physical qubit is therefore fully corrected, and no information about it propagates to the logical qubits. The causal influence is consequently zero:
\begin{equation}
    \cinf_{\phys,L} = 0 \quad\text{(pre- and post-measurement)}.
\end{equation}
This confirms that information about a correctable physical error does not propagate to the logical subspace.

\subsubsection{Logical-to-Logical Causal Influence}
The recovery channel acts as the identity on states already within the codespace, so $\cR(V_L\rho V_L^\dag) = V_L\rho V_L^\dag$. The resulting causal influence is nonzero and depends only on the dimension $D=2^k$ of the codespace:
\begin{align}
    \boxed{\cinf_{LL} = \frac{D - 1}{D^2(D^2+1)}\quad\text{(pre- and post-measurement)}}
\end{align}

\subsubsection{Physical-to-Ancilla Causal Influence}
This quantity measures how information about a physical error on a qubit propagates to the ancillas. We have
\begin{align}
\begin{split}
    M_{\phys,\anc}(V_j, O_\anc) &= \sum_{s, t} \tr\!\left(C_s \Pi_s V_j \rho V_j^\dag \Pi_t C_t^\dag \right) \braket{t|O_\anc|s}\,.
\end{split}
\end{align}

The first moment Haar average $\ex_{V_j} V_j \rho V_j^\dag$ is equivalent to a completely depolarizing channel acting on qubit $j$, which is a correctable error. Squaring and averaging over $O_\anc$ then gives
\begin{align}\label{eq:m_sq_outer}
\begin{split}
    \ex_{O_\anc} \left[\ex_{V_j} \left[M_{\phys,\anc}(V_j, O_\anc)\right]\right]^2 &= \frac{1}{16}\ex_{O_\anc}\left[\sum_{\alpha,\beta=0}^3 \braket{s(\sigma^\alpha_j)\otimes s(\sigma^\beta_j)|O_\anc^{\otimes2}|s(\sigma^\alpha_j)\otimes s(\sigma^\beta_j)}\right]\\
    &= a(D_\anc) + \frac{b(D_\anc)}{4}
\end{split}
\end{align}
where $a(D) \equiv (D^2+1)^{-1}$, $b(D) \equiv  a(D)D^{-1}$. This is valid pre- and post-syndrome measurement.

\paragraph*{Pre-measurement squared response function.}
The other part that goes into the causal influence is the squared response function:
\begin{align}\label{eq:checkpoint1}
\begin{split}
    \ex_{O_\anc} \left[M_{\phys,\anc}(V_j, O_\anc)^2\right] &= \sum_{s, t, p, q} \tr\!\left(\cR_{st}(V_j\rho V_j^\dag)\right)\tr\!\left(\cR_{pq}(V_j\rho V_j^\dag) \right)I_{stpq}
\end{split}
\end{align}
where $\cR_{st}(X) \equiv C_s \Pi_s X \Pi_t C_t^\dag$ and $I_{stpq} = a(D_\anc)\delta_{ts}\delta_{qp} + b(D_\anc)\delta_{tp}\delta_{qs}$. Evaluating this sum gives
\begin{align}\label{eq:m_sq_step1}
\begin{split}
    \ex_{O_\anc} \left[M_{\phys,\anc}^2\right] &= a(D_\anc)\left(\sum_{s} \tr\!\left(\cR_{ss}(V_j\rho V_j^\dag) \right)\right)^2 + b(D_\anc)\sum_{s, t} \left|\tr\!\left(\cR_{st}(V_j\rho V_j^\dag)\right)\right|^2\\
    &= a(D_\anc) + b(D_\anc)\sum_{s, t} \left|\tr\!\left(\cR_{st}(V_j\rho V_j^\dag)\right)\right|^2\\
\end{split}
\end{align}
where we used $\cR = \sum_s \cR_{ss}$ and that $\cR$ preserves the trace. The next step is to use the Knill-Laflamme conditions, which say $\Pi_0 E_s^\dag V_j \Pi_0 = \alpha_s(V_j)\Pi_0$ for some $\alpha_s(V_j) \in \mathbb{C}$. Here we defined $E_s \equiv C_s^\dag$ as the error corresponding to the correction $C_s$. Substituting this into the summand of \eqref{eq:m_sq_step1} gives
\begin{align}
\begin{split}
    \cR_{st}(V_j\rho V_j^\dag) &= \Pi_0 E_s^\dag V_j \Pi_0 \rho \Pi_0 V_j^\dag E_t \Pi_0 = \alpha_s(V_j)\alpha_t(V_j)^{*}\rho
\end{split}
\end{align}
and thus
\begin{align}
    \sum_{s, t} \left|\tr(\cR_{st}(V_j\rho V_j^\dag))\right|^2 &= \sum_{s, t} |\alpha_s(V_j)|^2|\alpha_t(V_j)|^2 = \left(\sum_s |\alpha_s(V_j)|^2\right)^2.
\end{align}
This final sum is evaluated like so:
\begin{align}
\begin{split}
    \sum_s |\alpha_s(V_j)|^2\rho &= \sum_s \left(\Pi_0 E_s^\dag V_j \Pi_0\right)^\dag \left(\Pi_0 E_s^\dag V_j \Pi_0\right)\rho\\
    &= \Pi_0 V_j^\dag \left(\sum_s E_s \Pi_0 E_s^\dag\right) V_j \Pi_0\rho\\
    &= \Pi_0 V_j^\dag V_j \Pi_0\rho\\
    &= \rho
\end{split}
\end{align}
where we used $\Pi_s = E_s\Pi_0 E_s^\dag$ and $\sum_s \Pi_s = \idm$. This proves $\sum_s |\alpha_s(V_j)|^2 = 1$. Hence, \eqref{eq:m_sq_step1} is independent of $V_j$ and we obtain
\begin{align}
    \ex_{V_j}\ex_{O_\anc} \left[M_{\phys,\anc}^2\right] &= a(D_\anc) + b(D_\anc).
\end{align}
Combining this result with \eqref{eq:m_sq_outer} we obtain
\begin{align}
    \boxed{\cinf_{\phys,\anc} = \frac{3}{4}\frac{1}{D_\anc(D_\anc^2 + 1)}\quad\text{(pre-measurement)}}
\end{align}
where $D_\anc$ is the dimension of the ancillary subsystem.

\paragraph*{Post-measurement squared response function.}
For the post-measurement case, the channel is modified such that only diagonal terms $s=t$ contribute. Starting from \cref{eq:checkpoint1}, we have (post-measurement),
\begin{align}
\begin{split}
    \ex_{O_\anc}\left[M^{\mgate}_{\phys,\anc}(V_j, O_\anc)^2\right] &= \sum_{s,t} \tr\left(\cR_{ss}(V_j\rho V_j^\dag)\right)\tr\left(\cR_{tt}(V_j\rho V_j^\dag) \right)I_{sstt}\\
    &= a(D_\anc)\left(\sum_{s} \tr\left(\cR_{ss}(V_j\rho V_j^\dag)\right)\right)^2 + b(D_\anc) \sum_{s} \tr\left(\cR_{ss}(V_j\rho V_j^\dag)\right)^2\\
    &= a(D_\anc) + b(D_\anc) \sum_{s} \tr\left(C_s \Pi_s V_j\rho V_j^\dag \Pi_s C_s^\dag)\right)^2\\
    &= a(D_\anc) + b(D_\anc) \sum_{s} \tr(\Pi_s V_j\rho V_j^\dag)^2
\end{split}
\end{align}
where we used $I_{sstt} = a(D_\anc) + b(D_\anc)\delta_{st}$ in the second equality. In the sum over $s$ only four syndromes can occur, one for each possible Pauli error on qubit $j$,
\begin{equation}
    \sum_{s} \tr(\Pi_s V_j\rho V_j^\dag)^2 = \sum_{\mu = 0,X,Y,Z}\tr(\Pi_{s(j,\mu)} V_j \rho V_j^\dag)^2
\end{equation}
Using the orthonormal operator basis $P_\mu=\sigma^\mu/\sqrt{2}$ for a single qubit,  expand $V_j = \sum_\mu v^\mu P_\mu$ and write $\tr(\Pi_{s(j,\mu)} V_j \rho V_j^\dag) = |v^\mu|^2/2$. Substituting this relation into the above sum gives
\begin{equation}
    \sum_{s} \tr(\Pi_s V_j\rho V_j^\dag)^2 = \sum_\mu \left(\frac{|v^\mu|}{\sqrt{2}}\right)^4.
\end{equation}
We also have
\begin{equation}
    \frac{1}{2}\sum_\mu |v^\mu|^2 = 1.
\end{equation}
We now associate each $v^\mu/\sqrt{2}$ with the component of a unit quaternion. The Haar average over $V_j$, then, is equivalent to computing a uniform average over the unit quaternions. Extracting the fourth-order moments from that distribution gives
\begin{align}
    \ex_{V_j} \sum_{s} \tr(\Pi_s V_j\rho V_j^\dag)^2 = 4 \cdot \frac{1}{8} = \frac{1}{2}
\end{align}
and so
\begin{align}
    \ex_{V_j} \ex_{O_\anc} \left[M^{\mgate}_{\phys,\anc}(V_j, O_\anc)^2\right] = a(D_\anc) + \frac{b(D_\anc)}{2}
\end{align}
Combining this result with \eqref{eq:m_sq_outer} we obtain
\begin{equation}
    \boxed{\cinf^{\mgate}_{\phys,\anc} = \frac{1}{4}\frac{1}{D_\anc(D_\anc^2 + 1)}\quad\text{(post-measurement)}}
\end{equation}
The causal influence is smaller after measurement, a result that has a clear physical interpretation. The pre-measurement influence $\cinf_{\phys,\anc}$ quantifies the full quantum information channel between the physical and ancilla qubits. This influence is sensitive to the quantum coherences between different syndrome states (i.e., the off-diagonal terms $\ket{s}\bra{t}$). The act of measurement destroys these coherences, collapsing the superposition into a single classical outcome. By discarding this quantum information, the post-measurement ancillas are less sensitive to the kick, and the measured influence $\cinf^{\mgate}_{\phys,\anc}$ is correspondingly weaker.

\subsection{Example: Repetition code}

The repetition code provides an instructive contrast because it can only correct one type of error (e.g., phase-flips) but is blind to another (e.g., bit-flips). This imperfection alters the causal influences involving physical qubits, but the logical-to-logical causal influence and the logical-to-ancilla remain the same.

\subsubsection{Physical-to-Ancilla Causal Influence}
For the repetition code, the physical-to-ancilla influence becomes dependent on the initial logical state:
\begin{subequations}
\begin{empheq}[box=\widefbox]{align}
    \cinf^{\mathrm{rep}}_{\phys,\anc} &= \frac{1}{3D_\anc(D_\anc^2 + 1)}\left( 1 + \frac{\braket{Z_L}^2}{2}\right)\quad &\text{(pre-measurement)}\\
    \cinf^{\mathrm{rep},\mgate}_{\phys,\anc} &= \frac{1}{6D_\anc(D_\anc^2 + 1)}\quad &\text{(post-measurement)}
\end{empheq}
\end{subequations}
The state dependence arises through the pre-measurement influence's sensitivity to $\braket{Z_L}^2$-dependent syndrome coherences. More specifically, if we compute $\tr_{\phys}(\cR'[V_j\rho V_j^\dag])$ and substitute $V_j = \cos(\alpha)\idm - i\sin(\alpha)\hat{n}\cdot\vec{\sigma}_j$, then we find state-independent diagonal components and $\braket{Z_L}^2$-dependent off-diagonal components.

\subsubsection{Physical-to-Logical Causal Influence}

Unlike weight-3 codes, the repetition code's inability to correct all errors results in a nonzero physical-to-logical influence, which is also state dependent:
\begin{equation}
    \boxed{\cinf^{\mathrm{rep}}_{\phys,L} = \frac{1}{30}\left(1 - \braket{Z_L}^2\right)\quad\text{(pre- or post-measurement)}}
\end{equation}
This influence is zero for the logical basis states $\ket{0_L}$ and $\ket{1_L}$ (where $\braket{Z_L}^2=1$), for which phase-flip perturbations are corrected. However, for superposition states, these errors corrupt the logical information, allowing information about the physical kick to leak into the logical subspace and resulting in a nonzero causal influence.

\bibliography{references}

@article{bluvsteinLogicalQuantumProcessor2024,
  title = {Logical Quantum Processor Based on Reconfigurable Atom Arrays},
  author = {Bluvstein, Dolev and Evered, Simon J. and Geim, Alexandra A. and Li, Sophie H. and Zhou, Hengyun and Manovitz, Tom and Ebadi, Sepehr and Cain, Madelyn and Kalinowski, Marcin and Hangleiter, Dominik and Bonilla Ataides, J. Pablo and Maskara, Nishad and Cong, Iris and Gao, Xun and Sales Rodriguez, Pedro and Karolyshyn, Thomas and Semeghini, Giulia and Gullans, Michael J. and Greiner, Markus and Vuleti{\'c}, Vladan and Lukin, Mikhail D.},
  year = {2024},
  month = feb,
  journal = {Nature},
  volume = {626},
  number = {7997},
  pages = {58--65},
  publisher = {Nature Publishing Group},
  issn = {1476-4687},
  doi = {10.1038/s41586-023-06927-3},
  urldate = {2025-06-06},
  copyright = {2023 The Author(s)}
}

@misc{bou-comasMeasuringTemporalEntanglement2024,
  title = {Measuring Temporal Entanglement in Experiments as a Hallmark for Integrability},
  author = {{Bou-Comas}, Aleix and Marim{\'o}n, Carlos Ramos and Schneider, Jan T. and Carignano, Stefano and Tagliacozzo, Luca},
  year = {2024},
  month = sep,
  number = {arXiv:2409.05517},
  eprint = {2409.05517},
  primaryclass = {quant-ph},
  publisher = {arXiv},
  doi = {10.48550/arXiv.2409.05517},
  urldate = {2025-06-05},
  archiveprefix = {arXiv}
}

@misc{chenFaultTolerantQuantum2025,
  title = {Fault Tolerant Quantum Simulation via Symplectic Transvections},
  author = {Chen, Zhuangzhuang and Weinberg, Jack Owen and Rengaswamy, Narayanan},
  year = {2025},
  month = apr,
  number = {arXiv:2504.11444},
  eprint = {2504.11444},
  primaryclass = {quant-ph},
  publisher = {arXiv},
  doi = {10.48550/arXiv.2504.11444},
  urldate = {2025-06-06},
  archiveprefix = {arXiv}
}

@article{connesNeumannAlgebraAutomorphisms1994,
  title = {{Von Neumann Algebra Automorphisms and Time-Thermodynamics Relation in Generally Covariant Quantum Theories}},
  author = {Connes, A and Rovelli, C},
  year = {1994},
  month = dec,
  journal = {Classical and Quantum Gravity},
  volume = {11},
  number = {12},
  pages = {2899--2917},
  issn = {0264-9381, 1361-6382},
  doi = {10.1088/0264-9381/11/12/007},
  urldate = {2025-06-05}
}

@article{cotlerQuantumCausalInfluence2019a,
  title = {Quantum Causal Influence},
  author = {Cotler, Jordan and Han, Xizhi and Qi, Xiao-Liang and Yang, Zhao},
  year = {2019},
  month = jul,
  journal = {Journal of High Energy Physics},
  volume = {2019},
  number = {7},
  pages = {42},
  issn = {1029-8479},
  doi = {10.1007/JHEP07(2019)042},
  urldate = {2023-07-05}
}

@article{cotlerSuperdensityOperatorsSpacetime2018,
  title = {Superdensity {{Operators}} for {{Spacetime Quantum Mechanics}}},
  author = {Cotler, Jordan and Jian, Chao-Ming and Qi, Xiao-Liang and Wilczek, Frank},
  year = {2018},
  month = sep,
  journal = {Journal of High Energy Physics},
  volume = {2018},
  number = {9},
  eprint = {1711.03119},
  primaryclass = {cond-mat, physics:hep-th, physics:quant-ph},
  pages = {93},
  issn = {1029-8479},
  doi = {10.1007/JHEP09(2018)093},
  urldate = {2023-02-03},
  archiveprefix = {arXiv}
}

@article{daleyPracticalQuantumAdvantage2022,
  title = {Practical Quantum Advantage in Quantum Simulation},
  author = {Daley, Andrew J. and Bloch, Immanuel and Kokail, Christian and Flannigan, Stuart and Pearson, Natalie and Troyer, Matthias and Zoller, Peter},
  year = {2022},
  month = jul,
  journal = {Nature},
  volume = {607},
  number = {7920},
  pages = {667--676},
  publisher = {Nature Publishing Group},
  issn = {1476-4687},
  doi = {10.1038/s41586-022-04940-6},
  urldate = {2025-06-06},
  copyright = {2022 Springer Nature Limited}
}

@article{dibiagioArrowTimeOperational2021,
  title = {The Arrow of Time in Operational Formulations of Quantum Theory},
  author = {Di Biagio, Andrea and Don{\`a}, Pietro and Rovelli, Carlo},
  year = {2021},
  month = aug,
  journal = {Quantum},
  volume = {5},
  pages = {520},
  issn = {2521-327X},
  doi = {10.22331/q-2021-08-09-520},
  urldate = {2025-06-05}
}

@article{doiTimelikeEntanglementEntropy2023,
  title = {Timelike Entanglement Entropy},
  author = {Doi, Kazuki and Harper, Jonathan and Mollabashi, Ali and Takayanagi, Tadashi and Taki, Yusuke},
  year = {2023},
  month = may,
  journal = {Journal of High Energy Physics},
  volume = {2023},
  number = {5},
  pages = {52},
  issn = {1029-8479},
  doi = {10.1007/JHEP05(2023)052},
  urldate = {2025-06-05}
}

@article{feynmanTheoryGeneralQuantum1963,
  title = {The Theory of a General Quantum System Interacting with a Linear Dissipative System},
  author = {Feynman, R. P and Vernon, F. L},
  year = {1963},
  month = oct,
  journal = {Annals of Physics},
  volume = {24},
  pages = {118--173},
  issn = {0003-4916},
  doi = {10.1016/0003-4916(63)90068-X},
  urldate = {2025-06-05}
}

@article{fitzsimonsQuantumCorrelationsWhich2015,
  title = {Quantum Correlations Which Imply Causation},
  author = {Fitzsimons, Joseph F. and Jones, Jonathan A. and Vedral, Vlatko},
  year = {2015},
  month = dec,
  journal = {Scientific Reports},
  volume = {5},
  number = {1},
  pages = {1--7},
  publisher = {Nature Publishing Group},
  issn = {2045-2322},
  doi = {10.1038/srep18281},
  urldate = {2025-06-05},
  copyright = {2015 The Author(s)}
}

@misc{gottesmanOpportunitiesChallengesFaulttolerant2022,
  title = {Opportunities and Challenges in Fault-Tolerant Quantum Computation},
  author = {Gottesman, Daniel},
  year = {2022},
  month = oct,
  number = {arXiv:2210.15844},
  eprint = {2210.15844},
  primaryclass = {quant-ph},
  publisher = {arXiv},
  doi = {10.48550/arXiv.2210.15844},
  urldate = {2025-06-06},
  archiveprefix = {arXiv}
}

@article{leroseScalingTemporalEntanglement2021,
  title = {Scaling of Temporal Entanglement in Proximity to Integrability},
  author = {Lerose, Alessio and Sonner, Michael and Abanin, Dmitry A.},
  year = {2021},
  month = jul,
  journal = {Physical Review B},
  volume = {104},
  number = {3},
  eprint = {2104.07607},
  primaryclass = {cond-mat, physics:quant-ph},
  pages = {035137},
  issn = {2469-9950, 2469-9969},
  doi = {10.1103/PhysRevB.104.035137},
  urldate = {2023-02-03},
  archiveprefix = {arXiv}
}

@article{liuInferringArrowTime2024,
  title = {Inferring the Arrow of Time in Quantum Spatiotemporal Correlations},
  author = {Liu, Xiangjing and Chen, Qian and Dahlsten, Oscar},
  year = {2024},
  month = mar,
  journal = {Physical Review A},
  volume = {109},
  number = {3},
  pages = {32219},
  issn = {2469-9926, 2469-9934},
  doi = {10.1103/PhysRevA.109.032219},
  urldate = {2025-06-06}
}

@article{macconeQuantumSolutionArrowoftime2009,
  title = {A Quantum Solution to the Arrow-of-Time Dilemma},
  author = {Maccone, Lorenzo},
  year = {2009},
  month = aug,
  journal = {Physical Review Letters},
  volume = {103},
  number = {8},
  eprint = {0802.0438},
  primaryclass = {quant-ph},
  pages = {080401},
  issn = {0031-9007, 1079-7114},
  doi = {10.1103/PhysRevLett.103.080401},
  urldate = {2025-05-10},
  archiveprefix = {arXiv}
}

@article{maskaraProgrammableSimulationsMolecules2025,
  title = {Programmable Simulations of Molecules and Materials with Reconfigurable Quantum Processors},
  author = {Maskara, Nishad and Ostermann, Stefan and Shee, James and Kalinowski, Marcin and McClain Gomez, Abigail and Araiza Bravo, Rodrigo and Wang, Derek S. and Krylov, Anna I. and Yao, Norman Y. and {Head-Gordon}, Martin and Lukin, Mikhail D. and Yelin, Susanne F.},
  year = {2025},
  month = feb,
  journal = {Nature Physics},
  volume = {21},
  number = {2},
  pages = {289--297},
  publisher = {Nature Publishing Group},
  issn = {1745-2481},
  doi = {10.1038/s41567-024-02738-z},
  urldate = {2025-06-05},
  copyright = {2025 The Author(s)}
}

@misc{milekhinObservableComputableEntanglement2025,
  title = {Observable and Computable Entanglement in Time},
  author = {Milekhin, Alexey and Adamska, Zofia and Preskill, John},
  year = {2025},
  month = feb,
  number = {arXiv:2502.12240},
  eprint = {2502.12240},
  primaryclass = {quant-ph},
  publisher = {arXiv},
  doi = {10.48550/arXiv.2502.12240},
  urldate = {2025-05-11},
  archiveprefix = {arXiv}
}

@article{milstedCollisionsFalseVacuumBubble2022,
  title = {Collisions of {{False-Vacuum Bubble Walls}} in a {{Quantum Spin Chain}}},
  author = {Milsted, Ashley and Liu, Junyu and Preskill, John and Vidal, Guifre},
  year = {2022},
  month = apr,
  journal = {PRX Quantum},
  volume = {3},
  number = {2},
  pages = {020316},
  publisher = {American Physical Society},
  doi = {10.1103/PRXQuantum.3.020316},
  urldate = {2024-03-04}
}

@article{pageEvolutionEvolutionDynamics1983,
  title = {Evolution without Evolution: Dynamics Described by Stationary Observables},
  shorttitle = {Evolution without Evolution},
  author = {Page, Don N. and Wootters, William K.},
  year = {1983},
  month = jun,
  journal = {Physical Review D},
  volume = {27},
  number = {12},
  pages = {2885--2892},
  issn = {0556-2821},
  doi = {10.1103/PhysRevD.27.2885},
  urldate = {2025-06-05},
  copyright = {http://link.aps.org/licenses/aps-default-license}
}

@article{selfProtectingExpressiveCircuits2024,
  title = {Protecting Expressive Circuits with a Quantum Error Detection Code},
  author = {Self, Chris N. and Benedetti, Marcello and Amaro, David},
  year = {2024},
  month = feb,
  journal = {Nature Physics},
  volume = {20},
  number = {2},
  eprint = {2211.06703},
  primaryclass = {quant-ph},
  pages = {219--224},
  issn = {1745-2473, 1745-2481},
  doi = {10.1038/s41567-023-02282-2},
  urldate = {2025-06-06},
  archiveprefix = {arXiv}
}

@article{sonnerInfluenceFunctionalManybody2021,
  title = {Influence Functional of Many-Body Systems: Temporal Entanglement and Matrix-Product State Representation},
  shorttitle = {Influence Functional of Many-Body Systems},
  author = {Sonner, Michael and Lerose, Alessio and Abanin, Dmitry A.},
  year = {2021},
  month = aug,
  journal = {Annals of Physics},
  volume = {431},
  eprint = {2103.13741},
  primaryclass = {cond-mat, physics:quant-ph},
  pages = {168552},
  issn = {00034916},
  doi = {10.1016/j.aop.2021.168552},
  urldate = {2023-02-03},
  archiveprefix = {arXiv}
}

@article{vandammeRealtimeScatteringInteracting2021,
  title = {Real-Time Scattering of Interacting Quasiparticles in Quantum Spin Chains},
  author = {Van Damme, Maarten and Vanderstraeten, Laurens and De Nardis, Jacopo and Haegeman, Jutho and Verstraete, Frank},
  year = {2021},
  month = jan,
  journal = {Physical Review Research},
  volume = {3},
  number = {1},
  pages = {013078},
  publisher = {American Physical Society},
  doi = {10.1103/PhysRevResearch.3.013078},
  urldate = {2024-03-04}
}

@article{whitfieldSimulationElectronicStructure2011,
  title = {Simulation of Electronic Structure Hamiltonians Using Quantum Computers},
  author = {Whitfield, James D. and Biamonte, Jacob and {Aspuru-Guzik}, Al{\'a}n},
  year = {2011},
  month = mar,
  journal = {Molecular Physics},
  volume = {109},
  number = {5},
  pages = {735--750},
  issn = {0026-8976, 1362-3028},
  doi = {10.1080/00268976.2011.552441},
  urldate = {2025-06-06}
}

@article{xuConstantoverheadFaulttolerantQuantum2024,
  title = {Constant-Overhead Fault-Tolerant Quantum Computation with Reconfigurable Atom Arrays},
  author = {Xu, Qian and Bonilla Ataides, J. Pablo and Pattison, Christopher A. and Raveendran, Nithin and Bluvstein, Dolev and Wurtz, Jonathan and Vasi{\'c}, Bane and Lukin, Mikhail D. and Jiang, Liang and Zhou, Hengyun},
  year = {2024},
  month = jul,
  journal = {Nature Physics},
  volume = {20},
  number = {7},
  pages = {1084--1090},
  publisher = {Nature Publishing Group},
  issn = {1745-2481},
  doi = {10.1038/s41567-024-02479-z},
  urldate = {2025-06-05},
  copyright = {2024 The Author(s), under exclusive licence to Springer Nature Limited}
}

@book{bengtsson_zyczkowski_2020, place={Cambridge, United Kingdom}, title={Geometry of quantum states: An introduction of quantum entanglement}, publisher={Cambridge University Press}, author={Bengtsson, Ingemar and {\. Z}yczkowski, Karol}, year={2020}}

@article{chenSpeedLimitsLocality2023,
  title = {Speed Limits and Locality in Many-Body Quantum Dynamics},
  author = {Chen, Chi-Fang and Lucas, Andrew and Yin, Chao},
  year = {2023},
  month = nov,
  journal = {Reports on Progress in Physics},
  volume = {86},
  number = {11},
  eprint = {2303.07386},
  primaryclass = {cond-mat, physics:math-ph, physics:quant-ph},
  pages = {116001},
  issn = {0034-4885, 1361-6633},
  doi = {10.1088/1361-6633/acfaae},
  urldate = {2024-08-03},
  archiveprefix = {arXiv}
}

@article{bernien2017probing,
  title={Probing many-body dynamics on a 51-atom quantum simulator},
  author={Bernien, Hannes and Schwartz, Sylvain and Keesling, Alexander and Levine, Harry and Omran, Ahmed and Pichler, Hannes and Choi, Soonwon and Zibrov, Alexander S and Endres, Manuel and Greiner, Markus and others},
  journal={Nature},
  volume={551},
  number={7682},
  pages={579--584},
  year={2017},
  publisher={Nature Publishing Group UK London}
}

@article{turner2018weak,
  title={Weak ergodicity breaking from quantum many-body scars},
  author={Turner, Christopher J and Michailidis, Alexios A and Abanin, Dmitry A and Serbyn, Maksym and Papi{\'c}, Zlatko},
  journal={Nature Physics},
  volume={14},
  number={7},
  pages={745--749},
  year={2018},
  publisher={Nature Publishing Group UK London}
}

@article{turner2018quantum,
  title={Quantum scarred eigenstates in a Rydberg atom chain: Entanglement, breakdown of thermalization, and stability to perturbations},
  author={Turner, CJ and Michailidis, AA and Abanin, DA and Serbyn, Maksym and Papi{\'c}, Z},
  journal={Physical Review B},
  volume={98},
  number={15},
  pages={155134},
  year={2018},
  publisher={APS}
}

\end{document}